\def\UrlSpecials{\do\~{\kern -.15em\lower .7ex\hbox{~}\kern .04em}} \catcode`~=13 
\newcommand{\bA}{\mathbf{A}}
\newcommand{\bB}{\mathbf{B}}
\newcommand{\bc}{\mathbf{c}}
\newcommand{\bC}{\mathbf{C}}
\newcommand{\bI}{\mathbf{I}}
\newcommand{\bk}{\mathbf{k}}
\newcommand{\bK}{\mathbf{K}}
\newcommand{\bq}{\mathbf{q}}
\newcommand{\bQ}{\mathbf{Q}}
\newcommand{\bR}{\mathbf{R}}
\newcommand{\bs}{\mathbf{s}}
\newcommand{\bw}{\mathbf{w}}
\newcommand{\bW}{\mathbf{W}}
\newcommand{\bx}{\mathbf{x}}
\newcommand{\bX}{\mathbf{X}}
\newcommand{\by}{\mathbf{y}}
\newcommand{\bY}{\mathbf{Y}}
\newcommand{\bz}{\mathbf{z}}
\DeclareMathAlphabet{\mathbsf}{OT1}{cmss}{bx}{n}
\DeclareMathAlphabet{\mathssf}{OT1}{cmss}{m}{sl}
\DeclareSymbolFont{bsfletters}{OT1}{cmss}{bx}{n}  
\DeclareSymbolFont{ssfletters}{OT1}{cmss}{m}{n}
\DeclareMathSymbol{\bsfGamma}{0}{bsfletters}{'000}
\DeclareMathSymbol{\ssfGamma}{0}{ssfletters}{'000}
\DeclareMathSymbol{\bsfDelta}{0}{bsfletters}{'001}
\DeclareMathSymbol{\ssfDelta}{0}{ssfletters}{'001}
\DeclareMathSymbol{\bsfTheta}{0}{bsfletters}{'002}
\DeclareMathSymbol{\ssfTheta}{0}{ssfletters}{'002}
\DeclareMathSymbol{\bsfLambda}{0}{bsfletters}{'003}
\DeclareMathSymbol{\ssfLambda}{0}{ssfletters}{'003}
\DeclareMathSymbol{\bsfXi}{0}{bsfletters}{'004}
\DeclareMathSymbol{\ssfXi}{0}{ssfletters}{'004}
\DeclareMathSymbol{\bsfPi}{0}{bsfletters}{'005}
\DeclareMathSymbol{\ssfPi}{0}{ssfletters}{'005}
\DeclareMathSymbol{\bsfSigma}{0}{bsfletters}{'006}
\DeclareMathSymbol{\ssfSigma}{0}{ssfletters}{'006}
\DeclareMathSymbol{\bsfUpsilon}{0}{bsfletters}{'007}
\DeclareMathSymbol{\ssfUpsilon}{0}{ssfletters}{'007}
\DeclareMathSymbol{\bsfPhi}{0}{bsfletters}{'010}
\DeclareMathSymbol{\ssfPhi}{0}{ssfletters}{'010}
\DeclareMathSymbol{\bsfPsi}{0}{bsfletters}{'011}
\DeclareMathSymbol{\ssfPsi}{0}{ssfletters}{'011}
\DeclareMathSymbol{\bsfOmega}{0}{bsfletters}{'012}
\DeclareMathSymbol{\ssfOmega}{0}{ssfletters}{'012}
\DeclareMathOperator*{\argmax}{arg\,max}
\DeclareMathOperator{\tr}{tr}
\DeclareMathOperator{\supp}{supp}
\DeclareMathOperator{\cov}{\mathsf{Cov}}
\newtheorem{theorem}{Theorem} 
\newtheorem{lemma}[theorem]{Lemma}
\newtheorem{data model}{Data Model}
\newcommand{\qednew}{\nobreak \ifvmode \relax \else
      \ifdim\lastskip<1.5em \hskip-\lastskip
      \hskip1.5em plus0em minus0.5em \fi \nobreak
      \vrule height0.75em width0.5em depth0.25em\fi}
\DeclareMathOperator{\lcm}{lcm}
\DeclareMathOperator{\gap}{gap}
\DeclareMathOperator{\SNR}{SNR}
\begin{document} 
\bstctlcite{IEEEexample:BSTcontrol}

\title{Detection of Brain Stimuli Using Ramanujan Periodicity Transforms}
\author{Pouria~Saidi,~Azadeh~Vosoughi and George~Atia\\%
Department of Electrical and Computer Engineering, University of Central Florida, Orlando, Fl, 32816 USA\\
pouria.saidi@knights.ucf.edu; azadeh@ucf.edu; george.atia@ucf.edu}

\maketitle

\begin{abstract}  
\textit{Objective:}
The ability to efficiently match the frequency of the brain's response to repetitive visual stimuli in real time is the basis for reliable SSVEP-based Brain-Computer-Interfacing (BCI).
\textit{Approach:} The detection of different stimuli is posed as a composite hypothesis test, where SSVEPs are assumed to admit a sparse representation in a Ramanujan Periodicity Transform (RPT) dictionary. 
For the binary case, we develop and analyze the performance of an RPT detector based on a derived generalized likelihood ratio test.
Our approach is extended to multi-hypothesis multi-electrode settings, where we capture the spatial correlation between the electrodes using pre-stimulus data. We also introduce a new metric for evaluating SSVEP detection schemes based on their achievable efficiency and discrimination rate tradeoff for given system resources.
\textit{Results:} We obtain exact distributions of the test statistic in terms of confluent hypergeometric functions.
Results based on extensive simulations with both synthesized and real data indicate that the RPT detector substantially outperforms spectral-based methods. Its performance also surpasses the state-of-the-art Canonical Correlation Analysis (CCA) methods with respect to accuracy and sample complexity in short data lengths regimes crucial for real-time applications. The proposed approach is asymptotically optimal as it closes the gap to a perfect measurement bound as the data length increases.
In contrast to existing supervised methods which are highly data-dependent, the RPT detector only uses pre-stimulus data to estimate the per-subject spatial correlation, thereby dispensing with considerable overhead associated with data collection for a large number of subjects and stimuli.
\textit{Significance:} Our work
advances the theory and practice of emerging real-time BCI and affords a new framework for comparing SSVEP detection schemes across a wider spectrum of operating regimes. 
\end{abstract}

\begin{IEEEkeywords}
Ramanujan periodicity transform, Nested periodic matrices, Steady-state visual evoked potentials, Brain computer interface, error exponent-discrimination rate tradeoff.
\end{IEEEkeywords}

\section{Introduction} \label{sec:Intro} \IEEEPARstart{B}{rain} computer interfacing (BCI) is used to connect the human nervous system to external devices. Pushing the frontiers of such technology holds promise to assist, and improve the quality of life of, patients with motor disability due to various neurological disorders through the use of intention-controlled prosthetics. 
Non-invasive BCIs exploit features from Electroencephalogram (EEG) signals distinctive of different tasks. For example, motor imagery BCI refers to a class of BCI in which features and patterns are extracted from EEG sensorimotor rhythms triggered by action imagination (e.g., imagined movement of the limbs). Despite noteworthy efforts to devise a host of techniques to classify EEG trials associated with movement imagination~\cite{shin2012sparse,saidi2015Globalsip}, 
motor imagery BCI remains at its infancy to date. 


Evoked-potential-based BCI is another non-invasive technology that records EEG signals from electrical activity induced on the cortex in response to some repetitive visual stimulus using electrodes attached to the scalp. The brain's responses to such external stimuli, known as Steady State Visual Evoked Potential (SSVEPs), are known to exhibit periodicity matching that of the stimuli~\cite{liavas1998periodogram}.   
The (quasi) periodic patterns of SSVEPs have been the basis for much progress in the theory and practice of SSVEP-BCI. For example, spectral-based methods capturing the energy distribution across the frequency spectrum such as Power Spectral Density Analysis (PSDA) are popular choices for classifying SSVEPs \cite{PSDA2005,liavas1998periodogram}. However, the presence of high levels of background noise due to brain chatter -- inevitably superimposed on the recorded brain's response -- continues to be a major challenge in face of the development of  reliable SSVEP detectors. Its degrading effect on performance is further compounded by the stringent and indispensable latency requirements of real-time BCI, in which high levels of accuracy are mandated with short delay. For instance, the frequency resolution of spectral-based methods is severely diminished with the use of shorter data lengths.  

An alternative and state-of-the-art approach to SSVEP detection relies on Canonical Correlation Analysis (CCA) -- initially proposed in \cite{CCA_original} to find relations between sets of variates. In particular, the authors in \cite{lin2006frequency} leveraged CCA in developing a method for SSVEP classification, which we refer to in this work as standard CCA. The key idea underlying standard CCA is obtaining the maximum correlation of the data with a reference matrix defined for each class consisting of periodic signals with the frequency of the stimulus and its harmonics to decide on a class label. 
Some of the state-of-the-art supervised algorithms originated from standard CCA such as in \cite{multiwaycca,multiwayl1,itcca,FCCA} to further boost the performance of SSVEP classification, albeit at the expense of heavy reliance on post-stimulus training trials  
\cite{nakanishi2015comparison}. One drawback of excessive reliance on such data in the training phase of supervised methods, such as Individual Template CCA (IT CCA), lies in the overhead and cost associated with data collection. For example, the low-frequency flashing lights in SSVEP-based BCIs used to obtain large amplitude responses on the brain cortex \cite{lowfreqstimuli} could become tiring and burdensome for some subjects, may cause eye fatigue, and may even trigger seizures in some patients \cite{photicsensitivity,highfreqstimuli}. 

Other than the use of Fourier transform and periodograms to identify periodic signals intrinsic to time series-data (as in PSDA), there exist periodicity estimation methods that search for periodicities and regularities in data directly in the time domain~ \cite{muresan2003orthogonal,periodicitytransform1999}. 
For example, the authors in \cite{tenneti2015nested} studied periodicity estimation using representations of discrete periodic sequences in Nested Periodic Matrices (NPMs). They also introduced the so-called Ramanujan Periodicity Transforms (RPT) as an instance of NPMs.  The use of RPT was shown to exhibit robustness to noise and phase shifts. Ramanujan sums defining the bases for RPT were shown useful in representing periodic sequences in various applications~\cite{planat2009}. However, only few works have investigated their application with biomedical signals, such as in the analysis of T-wave alternans \cite{mainardi2008analysis} and the detection of tandem repeats in DNA \cite{tenneti2016detecting}. We leveraged an RPT-based model to detect SSVEPs for the first time in \cite{saidi2017detection,Pouria_asilomar2017} providing preliminary results for the current work, and demonstrated its ability to capture the underlying periodicity in SSVEPs and its robustness to latencies naturally present in the brain's response to external stimuli. 
%

\noindent\textbf{Contributions:} In this paper, we build on our preliminary prior work on the RPT model and extensively analyze SSVEP detection in a composite hypothesis testing framework. The following summarizes the main contributions of this paper. 
\begin{itemize}
\item We develop an RPT detector of the brain's response associated with various stimuli based on a generalized likelihood ratio test (GLRT) under the proposed RPT model. 
\item We provide an exact analysis of the performance of the RPT detector by deriving the distributions of the test statistic characterized in terms of confluent hypergeometric functions for the binary case.
\item We devise flexible Gaussian approximations of the derived distributions that avail an efficient framework for the design of stimulus waveforms for high-accuracy BCIs.
\item We establish asymptotic optimality of the proposed approach by analyzing the performance gap with respect to a derived perfect measurement bound for the composite test. The gap is shown to vanish exponentially fast in the EEG system resources (data length $L$ and signal to noise ratio ($\SNR$)).
\item We introduce and investigate the tradeoff between the error exponent, $-\log P_e/ (L\cdot\SNR)$, capturing the rate at which the classification error decays with $L$ and $\SNR$ and the discrimination rate $\log_2 M$ (the logarithm of the number of classes being discriminated) for various detection methods including the RPT detector for a given number of electrodes. Inspired by the concept of diversity and multiplexing from communication theory \cite{zheng2003diversity}, this tradeoff is introduced here for the first time in the context of SSVEP detection and is shown to be particularly useful for comparing various detection methodologies across an entire spectrum of operation regimes.
\item We extend our approach and derive the corresponding test for multi-hypothesis and multi-channel settings, where we capture the spatial correlation between the recording electrodes. 
\end{itemize}

The paper is organized as follows. In section \ref{sec:Met}, we provide a brief background about Ramanujan sums and RPT matrices and their properties. Then, we present the composite hypothesis testing model, and provide an analysis of the RPT detector and the associated sufficient statistic for the binary case. The analysis is extended to multi-class and multi-electrode settings. We present our results on synthesized and real data in Section \ref{sec:Res}. Section \ref{sec:disc} is devoted to a discussion and our concluding remarks are in Section \ref{sec:conc}.
\section{Methods} \label{sec:Met}
\textit{Notation:} We use lowercase letters for scalars, bold lowercase letters for vectors and bold uppercase letters for matrices. We use $d|T$ to indicate that $d$ is a divisor of $T$. The Euler totient function of $p$, that is the number of positive integers smaller than $p$ that are co-prime to $p$, is denoted by $\phi(p)$.
Given a set $S$, the set $S^c$ denotes its complement. The operator $\tr(.)$ denotes the trace of its matrix argument, and $\log$ denotes the natural logarithm to the base $e$, unless the base is made explicit. We use the notation $\bx \sim \mathcal{N}(\mu, \Sigma)$ to indicate that a random vector $\bx$ has a multivariate normal (Gaussian) distribution with mean vector $\mu$ and covariance matrix $\Sigma$. 

\subsection{Ramanujan Sums and RPT Dictionary}
\label{sec:Ram}

In \cite{tenneti2015nested}, the authors introduced NPMs that can capture periodicity in sequences. They extended the notion of these matrices to periodicity dictionaries. In \cite{tenneti2016unified}, periodicity dictionaries of order $P_{\text{max}}$ are defined as the set of signals $\mathcal{B}$  that can represent all periodic sequences with period $1 \leq p \leq P_{\text{max}}$ through linear combinations of signals in the set $\mathcal{B}$. It was shown that a dictionary that spans all subspaces of periodic signals of periods $1$ to $P_{\text{max}}$ must contain at least $\phi(p)$ linearly independent signals with period $p$ for each $p$. Consequently, a periodic dictionary of order $P_{\text{max}}$ must have at least $\sum_{p=1}^{P_{\text{max}}}\phi(p)$ linearly independent signals.

RPT matrices are instances of the NPMs built from Ramanujan sums \cite{ramanujan1918sum}. A Ramanujan sum is defined as 
\begin{equation}
\begin{aligned}
c_q(n) = \sum_{\substack{k=1\\(k,q)=1}}^qe^{j2\pi kn/q}\:,
\end{aligned}
\end{equation}
where $(k,q)$ is the greatest common divisor (gcd) of $k$ and $q$. The sequence $c_q(n)$ is an all integer, symmetric and periodic sequence with period $q$. For instance, $c_1(n) = \{1\}$, $c_2(n) = \{1,-1\}$, $c_3(n) = \{2,-1,-1\}$, $c_4(n) = \{2,0,-2,0\}$ and $c_5(n) = \{4,-1,-1,-1,-1\}$. These examples show only one period of the sequences. Properties of Ramanujan sums are investigated in \cite{vaidyanathan2014ramanujan1} and \cite{vaidyanathan2014ramanujan2}. One important feature of $c_q(n)$ highly relevant to our work is the orthogonality property.  
Specifically, the sequences $c_{q_1}(n)$ and $c_{q_2}(n)$ are orthogonal over the sequence length $L = \lcm(q_1,q_2)$ for $q_1\ne q_2$, where $\lcm$ is the least common multiplier of $q_1$ and $q_2$, i.e.,
\begin{equation}
\begin{aligned}
\sum_{n=0}^{L-1} c_{q_1}(n) c_{q_2}(n-k) = 0, \quad q_1 \neq q_2
\end{aligned}
\label{eq:orth_prop}
\end{equation}
for any integer $0\leq k\leq L-1$.

We define the sequence $\bc_q$ 
\[
\bc_q = 
\begin{bmatrix}
c_q(0) & c_q(1) & \ldots & c_q(q-1)
\end{bmatrix}^T.
\]
For each $1\leq q \leq P_{\text{max}}$, a submatrix $\bC_q$ is constructed from columns that are circularly downshifted versions of $\bc_q$, i.e.,
\begin{equation}
\begin{aligned}
\bC_q=&
\begin{bmatrix}
\bc_q & \bc_q^{(1)} & \dots & \bc_q^{(\phi(q)-1)}
\end{bmatrix},
\end{aligned}
\end{equation}
where $\bc_q^{(1)}$ is the circularly downshifted version of $\bc_q$, i.e., 
\begin{equation}
\begin{aligned}
\bc_q^{(1)}=&
\begin{bmatrix}
c_q(q-1) & c_q(0) & c_q(1)&\dots & c_q(q-2)
\end{bmatrix}^T.
\end{aligned}
\end{equation}
The versions $\bc_q^{(i)}$ are similarly defined with higher order downshifts. For instance, $\bC_4$ and $\bC_5$ are
\begin{equation}
\begin{aligned}
\bC_4=&
\begin{bmatrix}
2 & 0\\0 & 2\\-2 & 0\\ 0 & -2
\end{bmatrix}_{4\times \phi(4)}
\bC_5=&
\begin{bmatrix}
4&-1&-1&-1\\-1&4&-1&-1\\-1&-1&4&-1\\-1&-1&-1&4\\-1&-1&-1&-1
\end{bmatrix}_{5\times \phi(5)}
\end{aligned}
\end{equation}
We extend the matrices $\bC_q$ periodically to length $L$ yielding submatrices $\bR_q$. We  can readily define the RPT dictionary matrix $\bK$ \cite{tenneti2016unified} constructed by concatenating all submatrices $\bR_q, q  = 1, \ldots, P_{\text{max}}$, where $P_{\text{max}}$ is the largest possible value that $q$ assumes. Thus,
\begin{equation}
\begin{aligned}
\bK=&
\begin{bmatrix}
\bR_1 & \bR_2 &\bR_3& \dots & \bR_{P_{\text{max}}}
\end{bmatrix}.
\end{aligned}
\end{equation}
Such dictionaries are called Nested Periodic Dictionaries (NPD). We note that NPDs have exactly $\phi(p)$ signals with period $p$ for each $p$. In the next two sections, we develop the SSVEP detection problem in a composite hypothesis testing framework using the signal representations in an RPT dictionary and analyze the performance of the developed detector.

\subsection{Binary SSVEP Detection} \label{sec:Binary_case}
\subsubsection{Composite binary hypothesis testing model} \label{sec:Binary_model}
In this section, we assume that the SSVEP is associated with one of two stimuli with frequencies $f_0$ and $f_1$, which represent two possible hypotheses $H_0$ and $H_1$, respectively. Equivalently, the recorded sequence is of period $T_0$ or $T_1$. We consider an observation model in which we express the measurements under each of the hypotheses using the model,
\begin{equation}
\begin{aligned}
H_0:& ~\by = \bK \bx_0 + \bw\\
H_1:& ~\by = \bK \bx_1 + \bw\\
\end{aligned}
\label{eq:model_binary_TBME}
\end{equation}
where $\by$ is an $L \times 1$ observation (measurement) vector, $\bK$ is the $L \times \sum_{p=1}^{P_{max}}\phi(p)$ RPT dictionary matrix ($L$ is the length of the sequences), $\bx_0$ and $\bx_1$ are the $\sum_{p=1}^{P_{max}}\phi(p) \times 1$ representations of the observations in the RPT subspace, modeled as two deterministic but unknown vectors. The $L\times 1$ vector $\bw$ is an additive white Gaussian noise (AWGN) vector with covariance matrix $\sigma^2\bI$ modeling the background EEG noise, that is,
$\bw \sim \mathcal{N}\left(0,\sigma^2\bI\right)$. Given $\bx_m$ we can write the distribution of $\by$ as $\by \sim \mathcal{N}\left(\bK \bx_m,\sigma^2\bI\right)$, where $m \in\{0,1\}$.
Reference \cite{tenneti2015nested} considers a period \emph{estimation} problem where the support set (locations of the nonzero entires) of the signal representation is unknown. By contrast, here we incorporate prior information about the support of $\bx_0$ and $\bx_1$ under both hypotheses since the period of the SSVEP should match 
that of the stimulus. 
We denote the known support sets $\supp(\bx_m)$ for $\bx_m$ by $S_m :=  \{j:x_{m,j}\ne 0 \}$, where $x_{m,j}$ is the $j$-th element of $\bx_m$, $m = 0,1$. The non-zero elements correspond to the columns of the submatrices of $\bK$ associated with period $T_m$ and its divisors. We define $\bK_{S_m}$ whose columns are  the columns of $\bK$ indexed by the support set $S_m$, and column vector $\bx_{S_m}$ whose entries are equal to the non-zero entries of $\bx_m$ indexed by $S_m$. We define the  $\SNR$ corresponding to hypothesis $H_m$ as
\begin{equation}
\begin{aligned}
\SNR_m = 
\frac{\bx_{S_m}^T\bK_{S_m}^T\bK_{S_m}\bx_{S_m}}{\sigma^2L}.
\end{aligned}
\label{eq:SNR}
\end{equation}
Without loss of generality, we set $\sigma^2 = 1$ so that different SNRs can be accounted for by varying the magnitude of the signal part on the support.

\subsubsection{Binary RPT detector} \label{sec:Binary_detector}
Let $f(\by|H_m)$ denote the conditional probability density function (pdf) of the measurement vector $\by$ given hypothesis $H_m$. For the model in  (\ref{eq:model_binary_TBME}) we can write: 
\begin{equation}
\begin{aligned}
&f\left(\by|H_m\right) =\frac{1}{(2\pi\sigma^2)^{\frac{L}{2}}}\exp\left(\frac{-\|\by-\bK\bx_{m}\|_2^2}{2\sigma^2}\right).
\end{aligned}
\label{eq:dist_binary_TBME}
\end{equation}
The binary RPT detector is obtained from the GLRT defined as \cite{van2004detection}
\begin{equation}
\begin{aligned}
\mathcal{L}(\by)=&\frac{\max\limits_{\bx_1:\supp(\bx_1) = S_1} f(\by|H_1)}{\max\limits_{\bx_0:\supp(\bx_0) = S_0} f(\by|H_0)} \: \underset{H_0}{\overset{H_1}{\gtrless}} \: \eta \:,
\end{aligned}
\label{eq:GLRT_binary_TBME}
\end{equation}
which requires computing the restricted Maximum Likelihood (ML) estimate of $\bx_m$ when the support set is restricted to $S_m$. 
From (\ref{eq:dist_binary_TBME}), the restricted ML estimate under $H_m, m = 0,1$, is obtained as the solution to
\begin{equation}
\begin{aligned}
\min_{\bx_m:\supp(\bx_m) = S_m} \|\by - \bK \bx_m\|_2^2.
\end{aligned}
\label{eq:restriced_MLE_binary_TBME}
\end{equation}
Accordingly, we can rewrite (\ref{eq:restriced_MLE_binary_TBME}) as
\begin{equation}
\begin{aligned}
\min \|\by - \bK_{S_m} \bx_{S_m}\|_2^2.
\end{aligned}
\label{eq:restriced_MLE_binary_TBME_rewrite}
\end{equation}
The solution to (\ref{eq:restriced_MLE_binary_TBME_rewrite}) can be expressed as
\begin{equation}
\begin{aligned}
\hat{\bx}_{S_m} &= (\bK_{S_m}^T \bK_{S_m})^{-1}\bK_{S_m}^T \by\:. \
\end{aligned}
\label{Psuedoinverse_binary_TBME}
\end{equation}
Replacing in (\ref{eq:GLRT_binary_TBME}), the decision rule reduces to
\begin{align}
\label{eq:Decision_rule_general_binary_TBME}
\ell(\by)=\by^T \bB \by - \by^T \bA \by 
\underset{H_0}{\overset{H_1}{\gtrless}} \: 2\sigma^2\log \eta = \gamma\:,
\end{align}
where $\bB=\bK_{S_1}(\bK_{S_1}^T\bK_{S_1})^{-1}\bK_{S_1}^T, \bA=\bK_{S_0}(\bK_{S_0}^T\bK_{S_0})^{-1}\bK_{S_0}^T$ and $\ell(\by)$ is the sufficient statistic. The two matrices $\bA$ and $\bB$ in (\ref{eq:Decision_rule_general_binary_TBME}) are idempotent, i.e., $\bA^2 = \bA, \bB^2 = \bB$ and their eigenvalues are either 0 or 1.

\subsubsection{Performance analysis of the binary RPT detector} \label{sec:performance}
In this section, we analyze the performance of the detector in (\ref{eq:Decision_rule_general_binary_TBME}). We start off with the special case where the length of $\by$ (measured in number of sample) is $L = \lcm(T_0,T_1)$. In this case, the orthogonality of the RPT sub-matrices is preserved. Under this assumption, we are able to obtain the exact distributions of the test statistic $\ell(\by)$, and in turn provide an exact performance analysis. Then, we consider the general case in which $\by$ is of arbitrary length $L$. In this case, the orthogonality of the RPT submatrices is not necessarily preserved.  For the general case, we provide an approximate analysis based on Gaussian approximations of the distributions of $\ell(\by)$. 

\noindent$\bullet$\textit{Orthogonal case when $L = \lcm(T_0,T_1)$:} 
\label{subsec:special_case}
If the periods $T_0$ and $T_1$ share any divisors, the support sets $S_0$ and $S_1$ are non-disjoint, i.e., $S_0 \cap S_1 \ne \emptyset$. Therefore, the decision rule in (\ref{eq:Decision_rule_general_binary_TBME}) reduces to 
\begin{equation}
\begin{aligned}
\ell(\by) = \by^T \bB^{\perp} \by - \by^T \bA^\perp \by\: \underset{H_0}{\overset{H_1}{\gtrless}} \: 2\sigma^2\log \eta = \gamma\: ,
\end{aligned}
\label{eq:dec_orthogonal_TBME}
\end{equation}
where the matrices $\bA^{\perp}$ and $\bB^{\perp}$ are given by
\begin{equation}\label{eq:orhtogonal_A_B_TBME}
\begin{aligned}
\bA^{\perp} =& ~\bK_{S_0\setminus S_1} ( \bK_{S_0\setminus S_1}^T \bK_{S_0\setminus S_1}  )^{-1}  \bK_{S_0\setminus S_1}^T\\
\bB^{\perp} =& ~\bK_{S_1\setminus S_0} ( \bK_{S_1\setminus S_0}^T \bK_{S_1\setminus S_0}  )^{-1}  \bK_{S_1\setminus S_0}^T \:, 
\end{aligned}
\end{equation}
and $S_i \setminus S_j$ denotes the difference set of $S_i$ and $S_j$. The superscript $^\perp$ is reserved for this orthogonal case throughout this subsection. Fig. \ref{fig:matrices}(a) shows a schematic for an example of the dictionary matrix $\bK$ and its submatrices in which $T_0=10$ and $T_1=8$. Fig. \ref{fig:matrices}(b) illustrates the RPT matrices that are restricted to support sets $S_0$ and $S_1$ and also submatrices $\bK_{S_0 \setminus S_1}$ and $\bK_{S_1 \setminus S_0}$ corresponding to two difference sets $S_0 \setminus S_1$ and $S_1\setminus S_0$, respectively. As shown, the submatrices $\bR_1$ and $\bR_2$ corresponding to the divisors 1 and 2, respectively, exist in both $\bK_{S_0}$ and $\bK_{S_1}$, thus do not play a role in the sufficient statistic. Per the definitions of $\bA^{\perp}$ and $\bB^{\perp}$ in (\ref{eq:orhtogonal_A_B_TBME}), the measurement vector $\by$ is projected onto the column space of $\bK_{S_0\setminus S_1}$ and $\bK_{S_1\setminus S_0}$, and the decision rule in (\ref{eq:dec_orthogonal_TBME}) chooses the index of the subspace that yields the larger projection.
This insight is demonstrated in Fig. \ref{fig:orthogonal_subspace}. In Fig. \ref{fig:orthogonal_subspace}(a), we have $L = \lcm (T_0,T_1)$ and thus the orthogonality of the RPT subspaces is preserved, and in Fig. \ref{fig:orthogonal_subspace}(b) $L$ is arbitrary and therefore the subspaces are not orthogonal. We have the following lemma. 
%

 \begin{figure}
 \includegraphics[width=\linewidth,height=5cm]{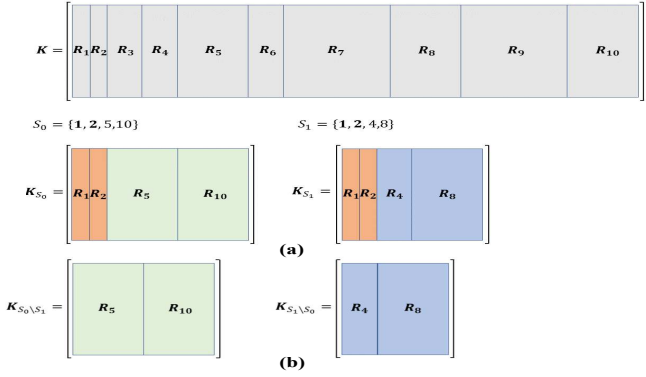}
\vspace{-.5cm}
\caption{(a) Schematic for an example of the RPT dictionary matrix $\bK$ and its submatrices, where $T_0=10$ and $T_1=8$.  (b) RPT matrices are restricted to support sets $S_0$ and $S_1$. In this example, $\bR_1$ and $\bR_2$ exist in both $\bK_{S_0}$ and $\bK_{S_1}$ matrices.}
\label{fig:matrices}
\vspace{-0.5cm}
 \end{figure}

\begin{figure}
\centering
\includegraphics[width=\linewidth,height=2.3cm]{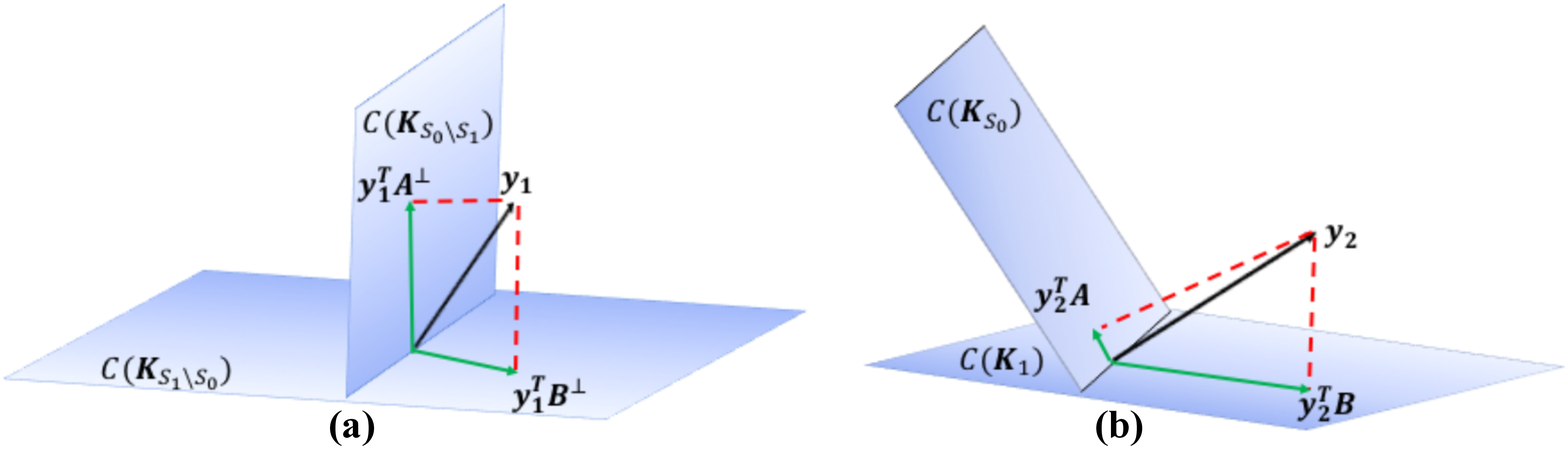}
\vspace{-.4cm}
\caption{(a) For $L = \lcm(T_0,T_1)$, the column spaces ${\cal C}(\bK_{S_0\setminus S_1})$ and ${\cal C}(\bK_{S_1\setminus S_0})$ of the orthogonal submatrices effective in the decision rule are orthogonal. The GLRT chooses the index corresponding to the subspace that yields the larger projection. Here, $\by_1^T\bA^{\perp} \by_1 >\by_1^T\bB
^{\perp} \by_1$, hence the RPT detector maps $\by_1$ to $H_0$. (b) For arbitrary $L$, the submatrices are not orthogonal. Since, $\by_2^T\bA\by_2 < \by_2^T\bB
 \by_2$, the RPT detector maps $\by_2$ to $H_1$.}
\label{fig:orthogonal_subspace}
\vspace{-0.5cm}
 \end{figure}
\begin{lemma} \label{lem:A_B_independent}
\textit{For $L = \lcm(T_0,T_1)$, we have $\bA^{\perp}\bB^{\perp} = \mathbf{0}$.}
\end{lemma}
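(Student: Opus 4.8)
The plan is to reduce the claim to the orthogonality property (\ref{eq:orth_prop}) of Ramanujan sums. Observe first that $\bA^{\perp}$ and $\bB^{\perp}$, as defined in (\ref{eq:orhtogonal_A_B_TBME}), are exactly the orthogonal projectors onto the column spaces $\mathcal{C}(\bK_{S_0\setminus S_1})$ and $\mathcal{C}(\bK_{S_1\setminus S_0})$. Hence it suffices to show that these two subspaces of $\bbR^L$ are mutually orthogonal, i.e., that $\bK_{S_0\setminus S_1}^T\bK_{S_1\setminus S_0} = \mathbf{0}$; once this is in hand, expanding the product of the two projectors leaves this zero block in the middle and gives $\bA^{\perp}\bB^{\perp} = \mathbf{0}$.

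To establish the orthogonality of the columns, I would note that every column of $\bK_{S_0\setminus S_1}$ is the length-$L$ periodic extension of a circular shift of $\bc_{q_1}$ for some $q_1 \mid T_0$ with $q_1 \nmid T_1$, and every column of $\bK_{S_1\setminus S_0}$ is the periodic extension of a circular shift of $\bc_{q_2}$ for some $q_2 \mid T_1$ with $q_2 \nmid T_0$; in particular $q_1 \ne q_2$, because a divisor of $T_0$ lying in the difference set cannot also divide $T_1$. The inner product of such a pair of columns has, after absorbing the shift of the first sequence into the argument of the second via $L$-periodicity, the form $\sum_{n=0}^{L-1} c_{q_1}(n)\,c_{q_2}(n-k)$ for some integer $k$.

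It remains to argue this sum vanishes. Since $q_1 \mid T_0$ and $q_2 \mid T_1$, both $q_1$ and $q_2$ divide $L = \lcm(T_0,T_1)$, so $L' := \lcm(q_1,q_2)$ divides $L$; write $L = mL'$. The summand $c_{q_1}(n)c_{q_2}(n-k)$ is periodic in $n$ with period $L'$ (each factor being periodic with period dividing $L'$), so the sum over $0 \le n \le L-1$ equals $m$ times the sum over a single block of length $L'$, which is $0$ by (\ref{eq:orth_prop}) since $q_1 \ne q_2$. Therefore $\bK_{S_0\setminus S_1}^T\bK_{S_1\setminus S_0} = \mathbf{0}$, and
\[
\bA^{\perp}\bB^{\perp} = \bK_{S_0\setminus S_1}(\bK_{S_0\setminus S_1}^T\bK_{S_0\setminus S_1})^{-1}\underbrace{\big(\bK_{S_0\setminus S_1}^T\bK_{S_1\setminus S_0}\big)}_{=\,\mathbf{0}}(\bK_{S_1\setminus S_0}^T\bK_{S_1\setminus S_0})^{-1}\bK_{S_1\setminus S_0}^T = \mathbf{0}.
\]

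The only point requiring care is the bookkeeping in the last two steps: confirming that the difference sets $S_0\setminus S_1$ and $S_1\setminus S_0$ genuinely exclude all shared divisors of $T_0$ and $T_1$, so that the relevant pairs $(q_1,q_2)$ are always distinct, and checking that orthogonality over the short common period $\lcm(q_1,q_2)$ lifts to orthogonality over the full window of length $L$ — which is precisely where the hypothesis $L = \lcm(T_0,T_1)$ is used. Everything else is the routine expansion of the projector product displayed above.
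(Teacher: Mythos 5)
Your proof is correct and follows essentially the same route as the paper's: reduce the claim to $\bK_{S_0\setminus S_1}^T\bK_{S_1\setminus S_0}=\mathbf{0}$ and deduce that from the Ramanujan-sum orthogonality property~(\ref{eq:orth_prop}) applied to the distinct divisors indexing the two difference sets. You additionally spell out a step the paper leaves implicit --- that orthogonality over $\lcm(q_1,q_2)$ lifts to the full window because $\lcm(q_1,q_2)$ divides $L=\lcm(T_0,T_1)$ and the summand is periodic with that period --- which is a worthwhile clarification but not a different argument.
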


\begin{proof} Lemma \ref{lem:A_B_independent} follows from the orthogonality property in (\ref{eq:orth_prop}), which for the choice of $L = \lcm(T_0,T_1)$ ensures that the columns of submatrices $\bK_{S_0}$ and $\bK_{S_1}$ corresponding to distinct divisors are orthogonal.
Since the submatrices $\bK_{S_0 \setminus S_1}$ and $\bK_{S_1 \setminus S_0}$ of the RPT dictionary $\bK$ indexed by the difference sets do not share any columns, we have $\bK_{S_0\setminus S_1}^T ~\bK_{S_1\setminus S_0} =0$, hence, $\bA^{\perp}\bB^{\perp} = 0$.
\end{proof}
To analyze the performance of the RPT detector, we need $f(\ell(\by)|H_0)$ and $f(\ell(\by)|H_1)$, the pdfs of the test statistic $\ell(\by)$ under each hypothesis. From \cite[Lemma 1.1]{anderson1980cochran} it follows that the two quadratic terms $\by^T \bA^{\perp} \by$ and  $\by^T \bB^\perp \by$ in (\ref{eq:dec_orthogonal_TBME}) have Chi-squared distributions $\chi^2(r_A^{\perp},\lambda_{m,A}^{2,\perp})$ and $\chi^2(r_B^{\perp},\lambda_{m,B}^{2,\perp})$ under hypothesis $H_m$, where $r_A^{\perp}$ and $r_B^{\perp}$ are the degrees of freedom, and $\lambda_{m,A}^{2,\perp}$ and $\lambda_{m,B}^{2,\perp}$ are the non-centrality parameters. From \cite{press1966linear}, we can readily state the following theorem proved in Appendix \ref{App:proof_thr_dist}, which provides exact expressions for the distributions of the sufficient statistic $\ell(\by)$ for the orthogonal case when $L = \lcm(T_0,T_1)$. 
\textit{
\begin{theorem}\label{thr:dist_theorem}(Distribution of test statistic for orthogonal case)
The distributions $f\left(\ell(\by)|H_m\right), m = 0,1$, of the sufficient statistic $\ell(\by) = \by^T \bB^{\perp} \by - \by^T \bA^{\perp} \by$ in (\ref{eq:dec_orthogonal_TBME}) are given by 
\begin{equation}\label{eq:dist_theory_TBME}
\begin{aligned}
f\left(\ell(\by)|H_1\right) = \sum_{i=0}^{\infty} \frac{\exp(-\frac{1}{2}\lambda_{1,B}^{2,\perp})(\lambda_{1,B}^{2,\perp}/2)^i}{i!} p_{r_B^{\perp}+2i,r_A^{\perp}}(t)\\
f\left(\ell(\by)|H_0\right) = \sum_{j=0}^{\infty} \frac{\exp(-\frac{1}{2}\lambda_{0,A}^{2,\perp})(\lambda_{0,A}^{2,\perp}/2)^j}{j!} p_{r_B^{\perp},r_A^{\perp}+2j}(t)\\
\end{aligned}
\end{equation}
where
%
%
\begin{equation}\label{eq:dist_final_TBME}
\begin{aligned}
p_{a,b}(t)
=
\begin{cases}
\frac{2^{\frac{-(a+b)}{2}}}{\Gamma(a/2)}t^{\frac{a+b-2}{2}}e^{\frac{-t}{2}} \psi(\frac{b}{2},\frac{a+b}{2};t) \quad  \text{if} \hspace{1mm} t\geq 0 \\
\\
\frac{2^{\frac{-(a+b)}{2}}}{\Gamma(b/2)}(-t)^{\frac{a+b-2}{2}}e^{\frac{t}{2}} \psi(\frac{a}{2},\frac{a+b}{2};-t) \quad  \text{if} \hspace{1mm} t\leq 0
\end{cases}
\end{aligned}
\end{equation}
and $\psi(a,b;x)$ is defined as 
\begin{equation}\label{eq:hyper_TBME}
\begin{aligned}
\psi(a,b;x) = (\Gamma(a))^{-1} \int_0 ^{\infty} e^{-xt} t^{a-1} (1+t)^{b-a-1} dt\:,
\end{aligned}
\end{equation}
%
%
%
and the parameters of the non-central Chi-squared distributions are
\begin{equation} \label{eq:char_ortg_TBME}
\begin{aligned}
\lambda_{0,A}^{2,\perp}=& \sum_{\substack{i=1\\d_i \notin \mathcal{T}_1}}^v  \bx_{S_0,i}^{T}\bK_{S_0,i}^T\bK_{S_0,i}\bx_{S_0,i}~,\quad \lambda_{1,A}^{2,\perp}=0\\
\lambda_{1,B}^{2,\perp} =&
\sum_{\substack{j=1,\\d_j \notin \mathcal{T}_0}}^u \bx_{S_1,j}^T\bK_{S_1,j}^{T}\bK_{S_1,j}\bx_{S_1,j}~,\quad \lambda_{0,B}^{2,\perp}=0\\
r_A^{\perp}=&\sum_{\substack{i=1\\d_i \notin \mathcal{T}_1}}^v\phi(d_i|T_0) \qquad 
r_B^{\perp}=\sum_{\substack{j=1\\d_j \notin \mathcal{T}_0}}^u\phi(d_j|T_1)\\
\end{aligned}
\end{equation}
where $\mathcal{T}_0$ and $\mathcal{T}_1$ are the sets of divisors of $T_0$ and $T_1$ of cardinalities $v$ and $u$, respectively, $\bK_{S_0,i}, i =1, \ldots, v$, and $\bK_{S_1,j}, j =1, \ldots, u$, are the corresponding submatrices, and $\bx_{S_m,i}$ and $\bx_{S_m,j}$ are rows of $\bx_{S_m}$ restricted to the index set of the $i$-th and $j$-th divisors of $T_0$ and $T_1$. 
\end{theorem}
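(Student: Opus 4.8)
The plan is to recognize $\ell(\by)$ as a difference of two \emph{independent} (possibly noncentral) chi-squared random variables and then invoke the known closed form for the density of such a difference.

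First I would fix $H_m$ and use that $\by \sim \mathcal{N}(\bK\bx_m, \bI)$ (with $\sigma^2 = 1$). Since $\bA^{\perp}$ and $\bB^{\perp}$ are orthogonal projections, each of $\by^T\bA^{\perp}\by$ and $\by^T\bB^{\perp}\by$ is a quadratic form in a Gaussian vector with an idempotent matrix; by the Cochran-type result \cite[Lemma 1.1]{anderson1980cochran} already invoked in the excerpt, $\by^T\bA^{\perp}\by \sim \chi^2(r_A^{\perp}, \lambda_{m,A}^{2,\perp})$ and $\by^T\bB^{\perp}\by \sim \chi^2(r_B^{\perp}, \lambda_{m,B}^{2,\perp})$, with $r_A^{\perp} = \rank(\bA^{\perp})$, $r_B^{\perp} = \rank(\bB^{\perp})$ and noncentralities $\lambda_{m,A}^{2,\perp} = (\bK\bx_m)^T\bA^{\perp}(\bK\bx_m)$, $\lambda_{m,B}^{2,\perp} = (\bK\bx_m)^T\bB^{\perp}(\bK\bx_m)$. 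Crucially, Lemma~\ref{lem:A_B_independent} gives $\bA^{\perp}\bB^{\perp} = \mathbf{0}$, which — since the noise covariance is $\bI$ — makes the two quadratic forms statistically independent; hence $\ell(\by)$ is a difference of two independent chi-squared variates.

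Next I would evaluate the degrees of freedom and the noncentralities. Because $\bA^{\perp}$ projects onto $\mathcal{C}(\bK_{S_0\setminus S_1})$ and the columns of $\bK_{S_0\setminus S_1}$ are linearly independent — each divisor block of $T_0$ not dividing $T_1$ contributes $\phi(\cdot)$ independent columns by the NPD property, and blocks for distinct divisors are orthogonal over $L = \lcm(T_0,T_1)$ by~(\ref{eq:orth_prop}) — the rank equals the column count $\sum_{d_i\notin\mathcal{T}_1}\phi(d_i|T_0)$, and symmetrically for $r_B^{\perp}$. For the noncentralities, under $H_1$ the mean $\bK\bx_1 = \bK_{S_1}\bx_{S_1}$ lies in $\mathcal{C}(\bK_{S_1})$; every column there is indexed by a divisor of $T_1$ and is orthogonal (via~(\ref{eq:orth_prop})) to every column of $\bK_{S_0\setminus S_1}$, which is indexed by a divisor of $T_0$ not dividing $T_1$, so $\bA^{\perp}\bK\bx_1 = \mathbf{0}$ and $\lambda_{1,A}^{2,\perp} = 0$; symmetrically $\lambda_{0,B}^{2,\perp} = 0$. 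For the surviving term $\lambda_{1,B}^{2,\perp} = \|\bB^{\perp}\bK\bx_1\|^2$, I would split $\bK_{S_1}\bx_{S_1}$ into the part supported on divisors in $\mathcal{T}_0$ (annihilated by $\bB^{\perp}$) and the part $\bK_{S_1\setminus S_0}\bx_{S_1\setminus S_0}$ (fixed by $\bB^{\perp}$), then expand $\|\bK_{S_1\setminus S_0}\bx_{S_1\setminus S_0}\|^2$; cross terms between distinct-divisor blocks vanish by~(\ref{eq:orth_prop}), leaving the block-diagonal sum $\sum_{d_j\notin\mathcal{T}_0}\bx_{S_1,j}^T\bK_{S_1,j}^T\bK_{S_1,j}\bx_{S_1,j}$, and analogously for $\lambda_{0,A}^{2,\perp}$.

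Finally I would assemble the density. Under $H_1$, $\ell(\by)$ equals a $\chi^2(r_B^{\perp},\lambda_{1,B}^{2,\perp})$ variate minus an independent central $\chi^2(r_A^{\perp})$; writing the noncentral variate as the usual Poisson$(\lambda_{1,B}^{2,\perp}/2)$ mixture of central $\chi^2(r_B^{\perp}+2i)$ variates reduces the problem to the density of a difference $\chi^2_a - \chi^2_b$ of two independent central chi-squareds. That density is exactly $p_{a,b}(t)$ of~(\ref{eq:dist_final_TBME}): convolving the $\chi^2_a$ density with the reflected $\chi^2_b$ density and evaluating the resulting integral produces the confluent hypergeometric function of the second kind $\psi(\cdot,\cdot;\cdot)$ of~(\ref{eq:hyper_TBME}), with the two cases $t \gtrless 0$ arising from which chi-square term dominates — this is precisely the result of Press~\cite{press1966linear}, which I would cite rather than rederive. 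Mixing over $i$ (and symmetrically over $j$ under $H_0$) yields~(\ref{eq:dist_theory_TBME}). The main obstacle is the bookkeeping in the third step: carefully tracking which divisor blocks survive each projection and verifying, via the shift-invariant orthogonality~(\ref{eq:orth_prop}) at $L = \lcm(T_0,T_1)$, that all cross-block inner products vanish so the noncentralities collapse to the stated per-divisor sums; the distributional core (independence plus Press's formula) is comparatively routine once Lemma~\ref{lem:A_B_independent} is available.
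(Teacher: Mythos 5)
Your proposal is correct and follows essentially the same route as the paper's Appendix~\ref{App:proof_thr_dist}: the Anderson/Cochran lemma for the noncentral $\chi^2$ laws of the two quadratic forms, Lemma~\ref{lem:A_B_independent} for their independence, the orthogonality property~(\ref{eq:orth_prop}) to kill $\lambda_{1,A}^{2,\perp}$ and $\lambda_{0,B}^{2,\perp}$ and collapse the surviving noncentralities to per-divisor sums, and Press's hypergeometric density for the difference. The only (cosmetic) difference is that you invoke the Poisson-mixture representation of the noncentral $\chi^2$ directly and reduce to the central-difference density $p_{a,b}$, whereas the paper first applies Lancaster's decomposition and Press's Theorem~3.3 to get a double sum whose second index is then shown to be degenerate ($l_j=0$ for $j>0$); both yield the identical single-sum expression~(\ref{eq:dist_theory_TBME}).
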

}

%
\begin{proof}
Per Lemma \ref{lem:A_B_independent} and Theorem \ref{thr:dist_theorem}, the sufficient statistic $\ell(\by)$ in (\ref{eq:dec_orthogonal_TBME}) is the difference of two independent non-central Chi-squared random variables (RVs). Accordingly, the proof of Theorem \ref{thr:dist_theorem} follows from Theorems 2.1B, 3.1, 3.2, 3.3 in \cite{press1966linear} which characterize the distribution of linear combinations of independent non-central Chi-squared RVs in terms of confluent hypergeometric functions. Further details are provided in Appendix \ref{App:proof_thr_dist}.
\end{proof}
In designing BCIs, it is desirable to select waveforms that yield well-separable hypotheses with respect to the employed signals representation. Due to the complexity of the exact pdf expressions of the sufficient statistic $\ell(\by)$ in (\ref{eq:dist_theory_TBME}), we approximate these pdfs with Gaussian distributions by fitting the first and second order statistics. 
These Gaussian approximations lead to a flexible waveform design. Recalling that a non-central Chi-squared distribution with $r$ degrees of freedom and non-centrality parameter $\lambda^2$ has a mean $r + \lambda^2$ and variance $2(r+2\lambda^2)$, the approximate Gaussian pdfs under $H_0$ and $H_1$ are 
\begin{equation}\label{eq:dist_approx_ortg_TBME}
\begin{aligned}
H_0: \ell(\by) \sim & \mathcal{N}\left(r_B^{\perp} -r_A^{\perp}-\lambda_{0,A}^{2,\perp}, 2(r_B^{\perp} + r_A^{\perp}) + 4\lambda_{0,A}^{2,\perp}\right)\\
H_1: \ell(\by) \sim & \mathcal{N}\left(r_B^{\perp}-r_A^{\perp}+\lambda_{1,B}^{2,\perp},2(r_B^{\perp} + r_A^{\perp}) + 4\lambda_{1,B}^{2,\perp}\right)
\end{aligned}
\end{equation}
To validate these theoretical results, we consider (\ref{eq:model_binary_TBME}) and assume that the periods corresponding to the two stimuli are $T_0 = 32$, $T_1 = 18$ under hypotheses $H_0$ and $H_1$, respectively. We fix $\SNR = -14$ dB and generate 2000 observation vectors $\by$ under each hypothesis. 
Fig.~\ref{fig:Gaussian_apprx}(a) shows that the exact pdfs of $\ell(\by)$ under $H_0$ and $H_1$ derived in (\ref{eq:dist_theory_TBME}) agree with the approximated Gaussian pdfs in (\ref{eq:dist_approx_ortg_TBME}) and histograms of the sufficient statistic $\ell(\by)$ obtained from numerical experiments.
\begin{figure}
\includegraphics[width=\linewidth,height=4cm]{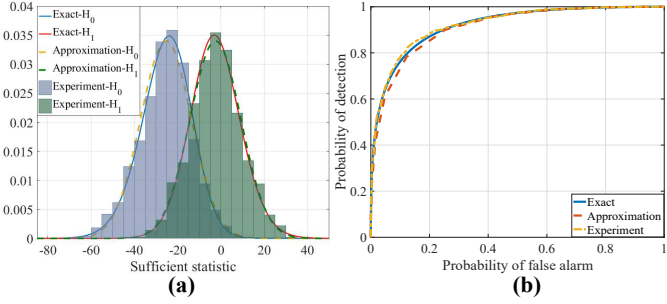}
\vspace{-.65cm}
\caption{(a) The exact and Gaussian approximated distributions of $\ell(\by)$ derived in (\ref{eq:dist_theory_TBME}) and (\ref{eq:dist_approx_ortg_TBME}) are in agreement and match the histogram of $\ell(\by)$ obtained from numerical experiments. (b) The ROC curve obtained from numerical integration of the exact pdfs is close to the ones obtained from the Gaussian approximated pdfs and from the numerical experiment.}
\label{fig:Gaussian_apprx}
\vspace{-0.5cm}
 \end{figure}
 Based on these Gaussian approximations, we can obtain the probability of detection $P_D : = \mathbb{P}_1(\ell(\by) > \gamma)$  and the probability of false alarm $P_F := \mathbb{P}_0({\ell(\by) > \gamma})$, where $\mathbb{P}_m$ denotes the probability measure under the $m$-th hypothesis
\begin{equation}\label{eq:P_D_bayes_special_TBME}
\begin{aligned}
P_D &=Q\left(\frac{\gamma -r_B^{\perp}+r_A^{\perp} -\lambda_{1,B}^{2,\perp}}{\sqrt{(2(r_B^{\perp}+r_A^{\perp})+4\lambda_{1,B}^{2,\perp}}}\right)\\
P_F &=Q\left(\frac{\gamma -r_B^{\perp}+r_A^{\perp}+\lambda_{0,A}^{2,\perp}}{\sqrt{(2(r_B^{\perp}+r_A^{\perp})+4\lambda_{0,A}^{2,\perp}}}\right)\\
\end{aligned}
\end{equation}
where $Q(.)$ denotes the Q-function, which is the tail of the standard Normal distribution. Fig. \ref{fig:Gaussian_apprx}(b) illustrates the ROC curves, where $P_D$ and $P_F$ corresponding to the exact pdfs are obtained using numerical integration of the exact pdfs in (\ref{eq:dist_theory_TBME}), and $P_D$ and $P_F$ corresponding to the approximate Gaussian pdfs in (\ref{eq:dist_approx_ortg_TBME}) are found using (\ref{eq:P_D_bayes_special_TBME}). Clearly, the ROC curves are in close agreement. 

\noindent$\bullet$\textit{General case when $L$ is arbitrary:} \label{subsec:arbitrary_L}
%
In this section, we treat the general case with arbitrary $L$ that does not necessarily lead to orthogonality. The two quadratic terms $\by^T \bA \by$ and $\by^T \bB \by$ have $\chi^2$ distributions $\chi^2(r_A,\lambda_{m,A}^2)$ and $\chi^2(r_B,\lambda_{m,B}^2)$ under hypotheses $H_m, m = 0, 1$. The parameters of these distributions can be written as
\begin{equation}
\begin{aligned}
\lambda_{0,A}^2=& \bx_{S_0}^{T}\bK_{S_0}^T\bA \bK_{S_0}\bx_{S_0}= \bx_{S_0}^{T}\bK_{S_0}^T \bK_{S_0}\bx_{S_0}\\
\lambda_{0,B}^2=&\bx_{S_0}^{T}\bK_{S_0}^T\bB \bK_{S_0}\bx_{S_0}\\
\lambda_{1,A}^2=& \bx_{S_1}^{T}\bK_{S_1}^T\bA \bK_{S_1}\bx_{S_1}\\
\lambda_{1,B}^2=&\bx_{S_1}^{T}\bK_{S_1}^T\bB \bK_{S_1}\bx_{S_1} = \bx_{S_1}^{T}\bK_{S_1}^T \bK_{S_1}\bx_{S_1}\\
r_{A} =& \sum_{i=1}^v \phi(d_i|T_0)= T_0 \qquad r_{B} = \sum_{j=1}^u \phi(d_j|T_1)=T_1
\end{aligned}
\label{eq:chi_char_TBME}
\end{equation}
Since $L$ is arbitrary, the two random variables $\by^T \bA \by$ and $\by^T \bB \by$ are not necessarily orthogonal, wherefore they are not independent and finding the exact pdf of $\ell(\by)$ requires obtaining the joint pdf of these two dependent/correlated RVs.  
Akin to the orthogonal case, we 
approximate the pdf of $\ell(\by)$ with a Gaussian distribution by fitting the first and second order statistics and considering the covariance between the two RVs $\by^T \bA \by$ and $\by^T \bB \by$
for the test statistic under each hypothesis. Therefore, we consider the approximate model
\begin{equation}\label{eq:dist_suff_stat_binary_TBME}
\begin{aligned}
H_0: \ell(\by) \:\sim \: & \mathcal{N}\left(r_{B}-r_{A}+\lambda_{0,B}^2 -\lambda_{0,A}^2,\sigma_{0}^2\right)  \\
H_1: \ell(\by) \:\sim \: & \mathcal{N}\left(r_{B}-r_{A}+\lambda_{1,B}^2 -\lambda_{1,A}^2,\sigma_{1}^2\right) \\
\end{aligned}
\end{equation}
where 
\begin{equation}\label{eq:var_TBME}
\begin{aligned}
\sigma_0^2 =  2(r_{B} + 2\lambda_{0,B}^2) & +2(r_{A} + 2\lambda_{0,A}^2) \\
& - 2\cov(\by^T \bB \by,\by^T \bA \by|H_0)\\
\sigma_{1}^2 = 2(r_{B} + 2\lambda_{1,B}^2) & +2(r_{A} + 2\lambda_{1,A}^2) \\ 
& - 2\cov(\by^T \bB \by,\by^T \bA \by|H_1)
\end{aligned}
\end{equation}
and $\cov(\by^T \bB \by,\by^T \bA \by|H_m)$ denotes the covariance between $\by^T \bA\by$ and $\by^T \bB \by$ under $H_m$. In Appendix \ref{App:proof_cov}, we show that 
\begin{equation}\label{eq:cross_cov_TBME}
\begin{aligned}
\cov(\by^T \bB \by,\by^T \bA \by|H_0) = 4 \lambda_{0,B}^2 +2 \sum_{i=1}^L\sum_{j=1}^L c_{ij}\\
\cov(\by^T \bB \by,\by^T \bA \by|H_1) = 4 \lambda_{1,A}^2 + 2\sum_{i=1}^L\sum_{j=1}^L c_{ij}
\end{aligned}
\end{equation}
where $c_{ij}$ is the $(i,j)$-th entry of matrix $\bC= \bA \odot \bB$, and $\odot$ is an element-wise product. We can readily obtain general expressions for $P_D$ and $P_F$,
\begin{equation}\label{eq:P_D_bayes_binary_TBME}
\begin{aligned}
&P_D = \\ 
&Q\left(\frac{\gamma -r_B -\lambda_{1,B}^2+r_A +\lambda_{1,A}^2}{\sqrt{(2(r_B+r_A)+4(\lambda_{1,B}^2-\lambda_{1,A}^2)-4\sum_{i=1}^L\sum_{j=1}^L c_{ij}}}\right)\\
&P_F = \\
&Q\left(\frac{\gamma -r_B-\lambda_{0,B}^2+r_A+\lambda_{0,A}^2}{\sqrt{(2(r_B+r_A)+4(\lambda_{0,A}^2-\lambda_{0,B})-4\sum_{i=1}^L\sum_{j=1}^L c_{ij}}}\right).\\
\end{aligned}
\end{equation}
Under the Neyman-Pearson criteria \cite{van2004detection}, the maximum value for $P_D$ for false alarm level $\alpha$, i.e., $P_F \leq \alpha$, is given in (\ref{eq:P_D_NP_binary_TBME}).
\begin{table*}[!t]
\begin{align}\label{eq:P_D_NP_binary_TBME}
P_D =Q\left(\frac{Q^{-1}(\alpha)\sqrt{2(r_B+r_A)+4(\lambda_{0,A}^2-\lambda_{0,B}^2)-4\sum_{i=1}^L\sum_{j=1}^L c_{ij}}+\lambda_{0,B}^2+\lambda_{1,A}^2-\lambda_{0,A}^2-\lambda_{1,B}^2}{\sqrt{(2(r_B+r_A)+4(\lambda_{1,B}^2-\lambda_{1,A}^2)-4\sum_{i=1}^L\sum_{j=1}^L c_{ij}}}\right)
\end{align}
\end{table*}
\begin{table*}[!t]
\begin{align} \label{eq:P_D_PMB_TMBE}
P_{D_{\text{PMB}}} = Q\left(Q^{-1}(\alpha)-\sqrt{\bx_{S_1}^T\bK_{S_1}^T\bK_{S_1}\bx_{S_1}+\bx_{S_0}^T\bK_{S_0}^T\bK_{S_0}\bx_{S_0}-\bx_{S_1}^T\bK_{S_1}^T\bK_{S_0}\bx_{S_0}-\bx_{S_0}^T\bK_{S_0}^T\bK_{S_1}\bx_{S_1}}\right)
\end{align}
\hrulefill
\end{table*}
\subsubsection{Perfect measurement bound for the binary RPT detector}
In this section, we derive an upper bound on the performance of the proposed binary RPT detector. We use a standard approach from detection theory in which an upper bound on the performance of all composite tests is obtained by assuming that the signals $\bx_0$ and $\bx_1$ are known, hence the appellation `perfect measurement bound' (PMB) \cite{van2004detection}. Under this hypothetical assumption, the problem reduces to one of a simple binary hypothesis testing problem. 
For our problem, under this assumption the optimal test becomes
\begin{equation} \label{eq:dec_rule_PMB_TMBE}
\begin{aligned}
\ell(\by) & = \by^T (\bK_{S_1}\bx_{S_1} - \bK_{S_0}\bx_{S_0})\\
&\underset{H_0}{\overset{H_1}{\gtrless}}\ln(\eta) + \frac{1}{2}(\bx_{S_1}^T\bK_{S_1}^T\bK_{S_1}\bx_{S_1} - \bx_{S_0}^T\bK_{S_0}^T\bK_{S_0}\bx_{S_0}).
\end{aligned}
\end{equation}
Given the model in (\ref{eq:model_binary_TBME}), under the Neyman-Pearson criteria, $P_D$ for false alarm level $\alpha$
is expressed in (\ref{eq:P_D_PMB_TMBE}). We note that, (\ref{eq:P_D_PMB_TMBE}) provides an upper bound on the performance of all composite tests corresponding to the model in (\ref{eq:model_binary_TBME}), including the proposed RPT detector in (\ref{eq:Decision_rule_general_binary_TBME}).

To compare the performance of the proposed RPT detector to the PMB, we let $\gap(L,\SNR)$ denote the difference between $P_D$ of the RPT detector and the PMB. 
Lemma \ref{lem:Gap} provides an approximate characterization of the gap in the asymptotic regime of large $L$. There are two main sources of approximation, namely using the Gaussian approximate distribution and using an approximation for the Q-function.  
\textit{
\begin{lemma} \label{lem:Gap}
Suppose the $\SNR$s corresponding to hypotheses $H_0$ and $H_1$, defined in (\ref{eq:model_binary_TBME}), are equal. For large $L$ and $\SNR$, the difference between $P_D$ of the RPT detector in (\ref{eq:P_D_NP_binary_TBME}) and the hypothetical detector (which corresponds to the PMB) in (\ref{eq:P_D_PMB_TMBE}) is well-approximated by
\begin{equation}\label{Gap_TBME}
\begin{aligned}
\gap(L,\SNR)
\hspace{-2pt}=\hspace{-2pt} P_D \hspace{-2pt}-\hspace{-2pt} P_{D_{\text{PMB}}} \hspace{-2pt}\approx\hspace{-2pt}\frac{e^{-\frac{L.\SNR}{2}}\left(\sqrt{2}-e^{-\frac{L.\SNR}{2}}\right)}{2\sqrt{\pi}\sqrt{L.\SNR}}\:.
\end{aligned}
\end{equation}
Accordingly,
\begin{equation}\label{Gap_order_TBME}
\begin{aligned}
\lim_{L\cdot\SNR\rightarrow\infty}\frac{|\log\gap|}{L\cdot \SNR}  = O(1)\:, 
\end{aligned}
\end{equation}
where $O(1)$ is a constant\footnote{This is the standard Big O notation, so that $f(x) = O(g(x))$ iff there exists positive numbers $C$ and $x_0$ such that $|f(x)| \leq C g(x), \forall x\geq x_0$.}.
\end{lemma}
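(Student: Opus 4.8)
The plan is to bring both probabilities to the common form $P=Q\!\left(Q^{-1}(\alpha)-D\right)$, where $D$ is an ``effective deflection'', to extract the leading-order growth of $D$ for the RPT detector and for the PMB as $L\cdot\SNR\to\infty$, and then to subtract the two $Q$-tails. The equal-$\SNR$ hypothesis is used throughout to identify, via (\ref{eq:SNR}) with $\sigma^2=1$, the two largest noncentrality parameters $\lambda_{0,A}^2=\bx_{S_0}^T\bK_{S_0}^T\bK_{S_0}\bx_{S_0}$ and $\lambda_{1,B}^2=\bx_{S_1}^T\bK_{S_1}^T\bK_{S_1}\bx_{S_1}$, both equal to $L\cdot\SNR$.

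First I would treat the PMB. Collecting the terms under the radical in (\ref{eq:P_D_PMB_TMBE}) and using $\bB\bK_{S_1}=\bK_{S_1}$, $\bA\bK_{S_0}=\bK_{S_0}$, the deflection is $D_{\text{PMB}}^2=\|\bK_{S_1}\bx_{S_1}-\bK_{S_0}\bx_{S_0}\|_2^2=\lambda_{1,B}^2+\lambda_{0,A}^2-2\,\bx_{S_0}^T\bK_{S_0}^T\bK_{S_1}\bx_{S_1}$. Equal $\SNR$ gives $\lambda_{1,B}^2=\lambda_{0,A}^2=L\cdot\SNR$, while the cross-correlation term is $o(L\cdot\SNR)$ in the regime of interest (it is a strict fraction of the signal energy for fixed waveforms and vanishes relative to it for well-separated designs). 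Hence $D_{\text{PMB}}=\sqrt{2\,L\cdot\SNR}\,(1+o(1))$.

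Next I would repeat this for the RPT detector, starting from the argument of $Q$ in (\ref{eq:P_D_NP_binary_TBME}). Into it I would substitute $r_A+r_B=T_0+T_1=O(1)$ (a fixed constant, by (\ref{eq:chi_char_TBME})), $\lambda_{0,A}^2=\lambda_{1,B}^2=L\cdot\SNR$, and observe that the cross-projection terms $\lambda_{0,B}^2,\lambda_{1,A}^2$ and the sum $\sum_{i,j}c_{ij}$ of the entries of $\bA\odot\bB$ enter only through the radicals and are dominated there by the two leading $L\cdot\SNR$ terms. Then the denominator $\sigma_1$ behaves like $2\sqrt{L\cdot\SNR}$, the mean separation $\lambda_{0,B}^2+\lambda_{1,A}^2-\lambda_{0,A}^2-\lambda_{1,B}^2$ like $-2\,L\cdot\SNR$, and the $Q^{-1}(\alpha)\sigma_0$ piece is only $O(\sqrt{L\cdot\SNR})$, so $P_D=Q\!\left(Q^{-1}(\alpha)-\sqrt{L\cdot\SNR}\,(1+o(1))\right)$, i.e. $D_{\text{RPT}}=\sqrt{L\cdot\SNR}\,(1+o(1))$. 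The factor $\sqrt2$ separating $D_{\text{PMB}}$ from $D_{\text{RPT}}$ is structural: the PMB statistic is linear in $\by$, whereas $\ell(\by)$ in (\ref{eq:dist_suff_stat_binary_TBME}) is quadratic, so its standard deviation itself scales with the signal energy.

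Finally I would combine the two. With $\gap=P_D-P_{D_{\text{PMB}}}$ and $Q(-z)=1-Q(z)$, and since both deflections diverge,
\[
|\gap|\;=\;Q\!\left(D_{\text{RPT}}-Q^{-1}(\alpha)\right)-Q\!\left(D_{\text{PMB}}-Q^{-1}(\alpha)\right).
\]
Treating $Q^{-1}(\alpha)=O(1)$ next to the $\Theta(\sqrt{L\cdot\SNR})$ deflections and applying $Q(z)\approx e^{-z^2/2}/(z\sqrt{2\pi})$ gives $Q(\sqrt{L\cdot\SNR})\approx e^{-L\cdot\SNR/2}/\sqrt{2\pi L\cdot\SNR}$ and $Q(\sqrt{2\,L\cdot\SNR})\approx e^{-L\cdot\SNR}/(2\sqrt{\pi L\cdot\SNR})$; subtracting and pulling out $1/(2\sqrt{\pi L\cdot\SNR})$ reproduces (\ref{Gap_TBME}) verbatim. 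Equation (\ref{Gap_order_TBME}) then follows at once, since $|\log\gap|=\tfrac12 L\cdot\SNR-\log\!\bigl(\sqrt2-e^{-L\cdot\SNR/2}\bigr)+\tfrac12\log(L\cdot\SNR)+\log(2\sqrt\pi)=\tfrac12 L\cdot\SNR\,(1+o(1))$. I expect the main obstacle to be the bookkeeping that certifies the ``interference'' contributions --- the cross-correlation $\bx_{S_0}^T\bK_{S_0}^T\bK_{S_1}\bx_{S_1}$, the cross-projections $\lambda_{0,B}^2,\lambda_{1,A}^2$ and $\sum_{i,j}c_{ij}$, and the fixed offset $Q^{-1}(\alpha)$ --- as genuinely lower order than $L\cdot\SNR$, and articulating precisely in what asymptotic sense the closed form holds, given that it already layers the $Q$-tail estimate on top of the Gaussian approximation of $\ell(\by)$.
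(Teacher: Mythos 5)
Your proposal is correct and follows essentially the same route as the paper's proof in Appendix~C: identify $\lambda_{0,A}^2=\lambda_{1,B}^2=L\cdot\SNR$ via the equal-$\SNR$ assumption, discard the cross-interference terms as lower order, apply the Börjesson--Sundberg approximation $Q(x)\approx e^{-x^2/2}/(\sqrt{2\pi}\sqrt{1+x^2})$ to the two deflections $\sqrt{L\cdot\SNR}$ and $\sqrt{2L\cdot\SNR}$, and subtract. Your "deflection" framing and your explicit flagging of the terms that must be certified as $o(L\cdot\SNR)$ (including the $Q^{-1}(\alpha)$ offset, which the paper silently drops as well) are only presentational refinements of the same argument.
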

}
\begin{proof}
See Appendix \ref{App:lemma_gap}
\end{proof}
Lemma \ref{lem:Gap} indicates that, assymptotically the gap is dominated by the exponentially decaying function, establishing the asymptotic optimality of the proposed RPT detector, in the sense that the performance gap with respect to the PMB -- which provides an upper bound on the performance of any composite test -- approaches zero. 

For a numerical example, we consider (\ref{eq:model_binary_TBME}) and assume that the periods corresponding to the two stimuli are $T_0 = 32$ and $T_1 = 18$ under hypotheses $H_0$ and $H_1$, respectively. We fix $\SNR = -15$dB and generate 5000 observation vectors $\by$ under each hypothesis. Fig. \ref{fig:GAP_TBME}(a) depicts the difference between the PMB and $P_D$ for the RPT detector in (\ref{eq:Decision_rule_general_binary_TBME}) as we vary $L$ from $\max\{T_0, T_1\}$ to 2000. The figure confirms that the numerical experiments are in agreement with the analytical Gaussian approximations and verifies the asymptotic optimality of the RPT detector. The constant slope of Fig. \ref{fig:GAP_TBME}(b) corroborates the linear scaling of $|\log\gap|$ with respect to the product $L\cdot\SNR$ as we derived in (\ref{Gap_order_TBME}).
\begin{figure}
\centering
\includegraphics[width=\linewidth,height=4cm]{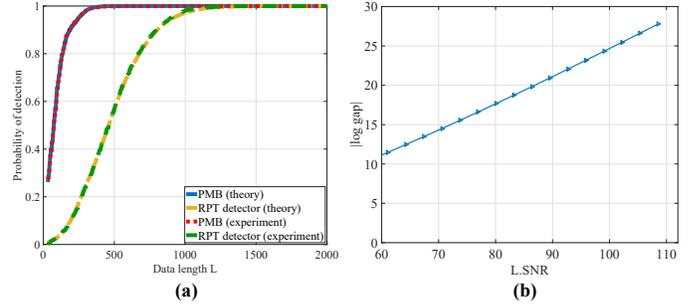}
\vspace{-.65cm}
\caption{(a) PMB and $P_D$ of the binary RPT detector from both theory and numerical experiment. (b) $|\log \gap|$ scales linearly in $L.\SNR$.}
\label{fig:GAP_TBME}
\vspace{-0.5cm}
\end{figure} 

\subsection{Multi-class and Multi-electrode SSVEP Detection} \label{sec:M-ary case}
\subsubsection{Composite M-ary hypothesis testing model} \label{sec:M_ary model}
Our model in (\ref{eq:model_binary_TBME}) can be extended to distinguishing $M>2$ hypotheses (multiple classes)
$H_m$ corresponding to $M$ stimuli and periodic brain responses with frequencies $f_m, m\in \mathcal{M}= \{0,1,\cdots,M-1\}$. 
We can also take into account the measurements collected (recorded) from multiple electrodes. Since these electrodes lie in close proximity, their signals are not independent. Hence, it is important that such an extended model also captures the spatial correlation between the measurements of different electrodes.
We extend the model in (\ref{eq:model_binary_TBME}) to $M$ hypotheses, in which the measurements under each hypothesis are modeled as 
\begin{equation}
\begin{aligned}
H_0:& ~\bY = \bK \bX_0 + \bW\\
H_1:& ~\bY = \bK \bX_1 + \bW\\
\vdots\\
H_{M-1}:& ~ \bY = 
\bK\bX_{M-1}+\bW
\end{aligned}
\label{eq:M_class_model_TBME}
\end{equation}
where $\bY$ is an $L\times N_c$ observation (measurement) matrix, $L$ is the length of the recorded data, $N_c$ is the number of electrodes (channels), and $\bX_m, m\in\mathcal{M}$ are the signal representations of the observations in the RPT subspace. Similar to the composite binary hypothesis testing model in Section \ref{sec:Binary_case}, the signal representations $\bX_m, m\in\mathcal{M}$, are modeled as $M$ deterministic but unknown matrices
and we restrict the row support of the signal representation $\bX_m$ under the $m$-th hypothesis to the atoms of the RPT dictionary matrix $\bK$ that span the subspace of periodic signals with period $T_m$. 
We denote the known support sets $\supp(\bX_m)$ for $\bX_m$ by $S_m := \{j: \bx_{m,j} \ne \mathbf{0}\}$, where $\bx_{m,j}$ is the $j$-th row of $\bX_m, m = 0, \ldots, M-1$. Let $\by_l, \bk_l$ and $\bw_l$ denote the $l$-th row of $\bY, \bK$ and $\bW$, respectively. From the extended model in (\ref{eq:M_class_model_TBME}) we have
\begin{equation}
\begin{aligned}
\by_l = \bk_l\bX_m + \bw_l, ~l = 1, \ldots, L, ~~ m\in \mathcal{M}\:.
\end{aligned}
\end{equation}
The noise vector $\bw_l$ is assumed to be Gaussian with covariance matrix $\mathbf{\Sigma_w}$, i.e., $\bw_l\sim \mathcal{N}(\mathbf{0},\mathbf{\Sigma_{w}})$, where $\mathbf{\Sigma_w}$ captures the spatial correlation between the measurements of different electrodes. Under these assumptions, we have 
$\by_l \sim \mathcal{N}(\bk_l\bX_m,\mathbf{\Sigma_{w}})$. Also, we assume the vectors $\bw_l, l = 1,\ldots, L$, are independent and identically distributed.
\subsubsection{Multi-class RPT detector} 
\label{sec:rpt_detec}
Let $f\left(\bY|H_m\right)$ denote the conditional pdf of the measurement matrix $\bY$ given hypothesis $H_m$. For the extended model in (\ref{eq:M_class_model_TBME}) we can write:
\begin{equation}
\begin{aligned}
&f\left(\bY|H_m\right) = \\
\hspace{-3pt} &\prod_{l=1}^{L}\hspace{-4pt} \frac{1}{(2\pi)^{\frac{N_c}{2}}|\mathbf{\Sigma_{w}}|^\frac{1}{2}}\hspace{-2pt}\exp\hspace{-3pt}\left(\hspace{-4pt}\frac{-\hspace{-2pt}\left(\by_l-\bk_l\bX_{m}\right)\hspace{-2pt}{\mathbf{\Sigma_w}}^{-1}\hspace{-2pt}\left(\by_l-\bk_l\bX_{m}\right)^T}{2}\hspace{-2pt}\right)\hspace{-3pt}.
\end{aligned}
\label{eq:dist_multi_elec_TBME}
\end{equation}
Similar to the binary RPT detector in Section \ref{sec:Binary_case}, the multi-class RPT detector is obtained 
from the GLRT assuming uniform prior probabilities $P(H_m)=\frac{1}{M}$ and uniform cost assignment \cite{van2004detection}.
We define the generalized likelihood ratios (GLRs)
\begin{equation}
\begin{aligned}
\mathcal{L}_m(\bY) \triangleq & \frac{\max\limits_{\bX_m:\supp(\bX_m) = S_m } f\left(\bY|H_m\right)}{\max\limits_{\bX_0:\supp(\bX_0) = S_0} f\left(\bY|H_0\right)},\quad m=1,\dots,M-1,
\end{aligned}
\label{eq:M_ary_GLRs_TBME}
\end{equation}
which require computing the restricted ML estimate of $\bX_m$ when the support set is restricted to $S_m$ for $m \in \mathcal{M}$. Let $\hat\bX_m$ denote the ML estimate of $\bX_m$ when the row support set is restricted to $S_m$. 
From (\ref{eq:dist_multi_elec_TBME}), we find that the ML estimate $\hat\bX_m$ is the solution to the following
\begin{equation}
\begin{aligned}
\min_{\bX_m:\supp(\bX_m) = S_m} \tr\left(\left(\bY-
\bK \bX_{m}\right)\mathbf{\Sigma_w}^{-1}\left(\bY-\bK \bX_{m}\right)^T\right) 
\end{aligned}
\label{eq:ML_X}
\end{equation}
such that $\bx_{m,j} = \mathbf{0}, \forall j\in S_{m}^c$.
Accordingly, (\ref{eq:ML_X}) can be rewritten as
\begin{equation}
\begin{aligned}
\min \tr\left(\left(\bY-
\bK_{S_m} \bX_{S_m}\right)\mathbf{\Sigma_w}^{-1}\left(\bY-\bK_{S_m} \bX_{S_m}\right)^T\right)
\end{aligned}
\label{eq:ML_X_S_TBME}
\end{equation}
where $\bK_{S_m}$ and $\bX_{S_m}$ are the columns of $\bK$ and rows of $\bX_m$ indexed by $S_m$, respectively.  
The solution to (\ref{eq:ML_X_S_TBME}) can be written as (for proof see Appendix \ref{App:est})
\begin{equation}
\begin{aligned}
\hat{\bX}_{S_m} = \left(\bK_{S_m}^{T}\bK_{S_m}\right)^{-1}\bK_{S_m}^{T}\bY.
\end{aligned}
\label{eq:estimator_multi}
\end{equation}

Thus, the decision $\hat m$ is obtained by replacing (\ref{eq:dist_multi_elec_TBME}) in (\ref{eq:M_ary_GLRs_TBME}), which gives
\begin{align}
&\hat m\left(\bY\right) = \argmax\limits_{m\in\mathcal{M}}\hspace{-3pt}\nonumber\\
&
\hspace{-2mm}\tr\left(\bY\mathbf{\Sigma_w}^{-1}\hat{\bX}_{S_m}^{T}\bK_{S_m}^{T}-\frac{\bK_{S_m}\hat{\bX}_{S_m}\mathbf{\Sigma_w}^{-1}\hat{\bX}_{S_m}^{T}\bK_{S_m}^{T}}{2} \right).
\label{eq:multi_class_decision_TBME}
\end{align}
Replacing with the estimates in (\ref{eq:estimator_multi}), we reach
\begin{align}
\hat m\left(\bY\right) = \argmax_{m\in\mathcal{M}} \tr\left(\bY\mathbf{\Sigma_w}^{-1}\bY^T\bA_m\right),
\label{eq:multi_elec_m_class_TBME}
\end{align}
where $\bA_m$ is
\begin{equation}
\begin{aligned}
\bA_m =\bK_{S_m}\left(\bK_{S_m}^{T}\bK_{S_m}\right)^{-1}\bK_{S_m}^{T}.
\end{aligned}
\end{equation}

\subsection{Error Exponent - Discrimination Rate Tradeoff}
It is conceivable that a reliable detection scheme should be able to leverage on information obtained from additional electrodes to boost the system's efficiency, given that the detector is tasked with distinguishing between a fixed number of classes. Alternatively, for the same efficiency, the additional electrodes may be exploited to 
increase the detector's ability to discriminate a larger number of classes. This insight is the basis for a new tradeoff, introduced here for the first time in the context of SSVEP detection, between `error exponent' and the `discrimination rate' defined in our problem setup as $-\log P_e/ (L\cdot\SNR)$ and $\log_2 M$, respectively. Inspired by the notion of diversity and multiplexing tradeoff in information theory \cite{zheng2003diversity}, our goal is to provide a suitable metric associated with a broader spectrum of operating regimes across which different SSVEP detection methods could be compared. Specifically, given the system resources $L$ and $\SNR$, a detection method can leverage a multi-electrode EEG system (with $N_c$ electrodes) to trade the system's efficiency (measured by the error probability $P_e$ associated with the detector) for a higher discrimination rate (the detector's ability to discriminate $M$ classes, measured by $\log_2M$) and vice-versa.  
The error probability $P_e$ in the binary hypothesis testing setting with uniform prior probabilities $P(H_0)=P(H_1)=1/2$ is $P_e = \frac{1}{2}\left(1-P_D+P_F\right)$, thus it can be approximated in the high SNR regime as
\begin{equation}\label{P_e_TBME}
\begin{aligned}
P_e
=\frac{e^{-\frac{L.\SNR}{8}}}{\sqrt{2\pi}\sqrt{L.\SNR/4}}\:,
\end{aligned}
\end{equation}
using the expressions of $P_D$ and $P_F$ derived in (\ref{eq:P_D_bayes_binary_TBME}) and the Q-function approximation. Since $-\log P_e/ (L\cdot\SNR)$ captures the rate at which the error probability decays with $L$ and $\SNR$, we use the logarithm of the error probability as a measure of the system's efficiency. For distinguishing between $M$ classes, we define the average error probability as 
\begin{equation}
\begin{aligned}
P_e = \frac{1}{M}\sum_{m=1}^M P(H_j|H_m) \quad \text{for } j\neq m. 
\end{aligned}
\end{equation}
 
As a result, we can characterize the error probability for an entire spectrum of regimes corresponding to different discrimination rates for a given number of electrodes $N_c$ and system resources (in terms of data length $L$ and $\SNR$). 
 
\section{Numerical Results} \label{sec:Res}
In this section, we start off by presenting some numerical results from experiments with synthesized data to verify and validate the theory developed for the binary RPT detector of Section \ref{sec:Binary_case}. Then, we provide results for the multi-class and multi-electrode RPT detector using both synthesized and real data, along with comparisons to the existing SSVEP detection methods, including PSDA, standard CCA and IT CCA \cite{itcca}. As real data, we employ a publicly available SSVEP dataset \cite{nakanishi2015comparison}, which contains multiple goal frequencies (frequencies of stimuli) $f_m, m=0, ..., M-1$, with a sampling frequency of $f_s=256$ Hz, and $M$ is the number of classes to be discriminated. The data length $L$ (measured in samples) is related to the data length $T$ (measured in seconds) through $T=L/f_s$.
We study the classification accuracy $A = 1-P_e$ and the Information Transfer Rate (ITR) defined in \cite{nakanishi2015comparison} in terms of the number of classes to be discriminated $M$, the classification accuracy $A$, and the data length $T$ in seconds, 
\begin{equation}
\begin{aligned}
\text{ITR} \hspace{-2pt}=\hspace{-3pt}\left(\log_2 M\hspace{-1pt} +\hspace{-1pt} A\log_2A + (1-A)\log_2\hspace{-3pt}\Big(\frac{1-A}{M-1}\Big)\hspace{-3pt}\right)\hspace{-3pt}\Big(\frac{60}{T}\Big).\nonumber
\end{aligned}
\end{equation}
We further discuss the advantage of NPMs as finite complete bases for estimating the periods of periodic sequences and study the newly introduced error exponent-discrimination rate tradeoff in terms of the number of electrodes $N_c$.

\subsection{Theoretical Validation of Binary RPT Detector Using Synthesized Data}

Fig.~\ref{fig:Gaussian_apprx} validated our theoretical results for the binary RPT detector for the orthogonal case when $L=\lcm(T_0,T_1)$. Here, we validate  our theoretical results for the general case when $L$ is arbitrary. We consider (\ref{eq:model_binary_TBME}) and fix $\SNR = - 15$ dB and generate 2500 observation vectors $\by$  under each hypothesis. We consider two cases: for case (a) 
we assume that the periods corresponding to the two stimuli are $T_0 = 15$  and $T_1 = 25$ under hypotheses $H_0$  and $H_1$, respectively, and for case (b) we assume $T_0 = 32$  and $T_1 = 18$.
Fig.~\ref{fig:ROC_TBME} illustrates the ROC curves for these two cases, where $P_D$ and $P_F$ are obtained using both the expressions derived in (\ref{eq:P_D_bayes_binary_TBME}) and the numerical experiment.  Fig.~\ref{fig:P_e_TBME}(a) and Fig.~\ref{fig:P_e_TBME}(b) show the error probability $P_e$, and the classification accuracy $A$ in terms of the data length $L$ for the setup in case (b). These two figures show that results from both theory and experiments strongly agree.  
\begin{figure}
\centering
\includegraphics[width=\linewidth,height=4cm]{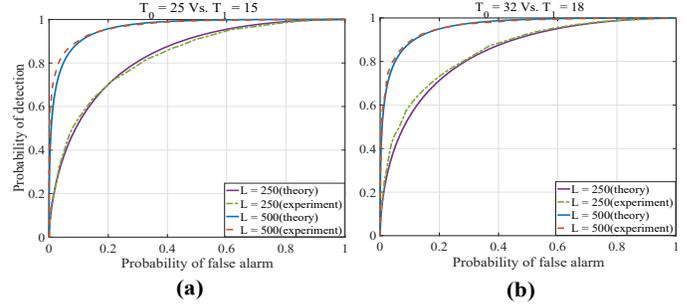}
\vspace{-.65cm}
\caption{ROC curves show $P_D$ versus $P_F$ for both theory (Gaussian approximated pdfs) and numerical experiment. (a)  $T_0 = 25$, $T_1 = 15$, (b) $T_0 = 32$, $T_1 = 18$.}
\label{fig:ROC_TBME}
\end{figure}
\begin{figure}
\centering
\includegraphics[width=\linewidth,height=4cm]{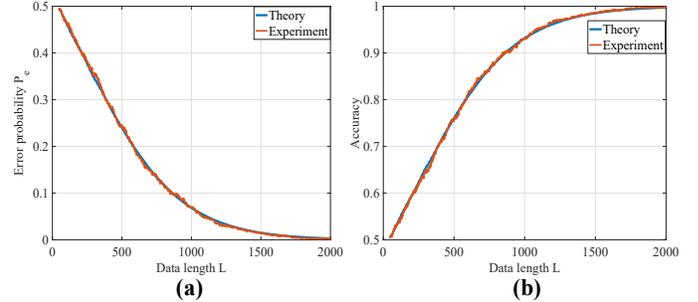}
\vspace{-.65cm}
\caption{$T_0 = 32$ and $T_1 = 18$ (a) error Probability $P_e$ is obtained using both theory (Gaussian approximated pdfs) and numerical experiment. (b) Classification accuracy versus data length $L$ (in number of samples).}
\label{fig:P_e_TBME}
\vspace{-0.5cm}
\end{figure}

\subsection{Real Data Description for Validation of Multi-class and Multi-electrode RPT Detector} The public SSVEP daatset \cite{nakanishi2015comparison} we use is generated based on the sampling frequency of $f_s=256$ Hz.
Ten subjects have participated in the experiment and 15 trials have been recorded for each subject per goal frequency. Hence, the dataset has a total number of 150 trials for each goal frequency. We use 9 goal frequencies (i.e., the number of classes to be discriminated is $M=9$) for performance validation of the RPT detector and performance comparison among different SSVEP detection methods. These 9 goal frequencies are $f_0=9.25$, $f_1=9.75$, $f_2=10.25$, $f_3=10.75$, $f_4=11.25$, $f_5=11.75$, $f_6=12.25$, $f_7=12.75$ and $f_8=14.25$ Hz.
All the recorded data is filtered using 4-30 Hz bandpass filters. The trials are recorded using 8 electrodes positioned on the occipital region of the subject's brain cortex as depicted in Fig.~\ref{fig:location}. 
The period $T_m$ corresponding to each goal frequency $f_m$ is computed as $T_m=f_s/f_m$, then is rounded to its nearest integer value. 
\begin{figure}
\centering
 \includegraphics[width=4.5cm,height=4.5cm]{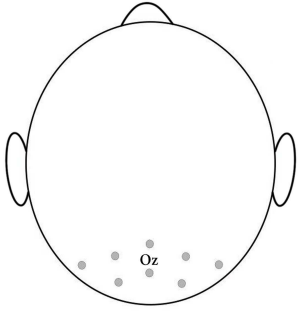}
 \vspace{-0.5cm}
\caption{Eight electrodes are positioned on the occipital region of the subject's brain cortex.}
\label{fig:location}
 \end{figure}
 
\subsection{Validation of Multi-class and Multi-electrode RPT Detector Using Synthesized and Real Data}
In this section, we investigate the performance of the multi-class and multi-electrode RPT detector obtained in (\ref{eq:multi_elec_m_class_TBME}).
Since all the electrodes are located on the occipital region of the brain cortex, the collected data is normally correlated. As described in Section \ref{sec:M-ary case}, our model captures the spatial correlation between the data of different electrodes through the covariance matrix $\mathbf{\Sigma_w}$. 
We use \emph{pre-stimulus} data to find a sample covariance matrix and use it as an estimate of the true $\mathbf{\Sigma_w}$. 
An advantage of using only pre-stimulus data segments in estimating $\mathbf{\Sigma_w}$ is that one does not need to collect data from each subject under each goal frequency $f_m$. This is important in practice as it significantly reduces the time, cost, and overhead associated with data collection, especially with a large number of classes and subjects.

\noindent$\bullet$\textit{Latency and its effect on the performances of SSVEP detection methods:}
\begin{figure}
\vspace{0.5cm}
\centering
\includegraphics[width=8.5cm,height=3.5cm]{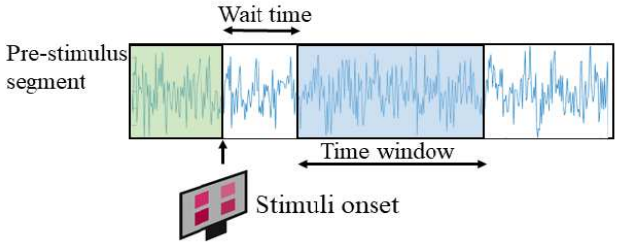}
\vspace{-0.3cm}
\caption{Schematic for an SSVEP observation. There is an unknown latency between the stimulus onset and the brain's response.}
\label{fig:latency_proc}
\vspace{-0.7cm}
\end{figure}
\begin{figure}
\vspace{0.5cm}
\centering
\includegraphics[width=\linewidth,height=4cm]{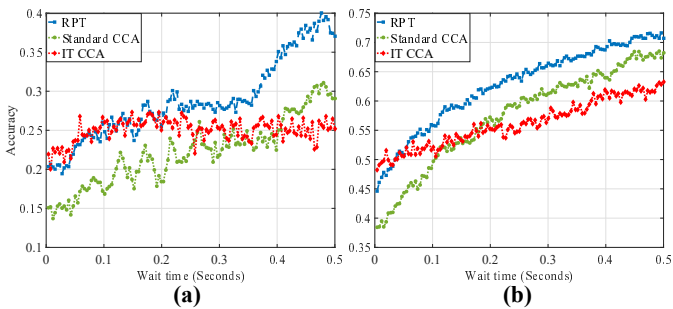}
\caption{Classification accuracy with (a) 0.5 second post-stimulus SSVEPs and (b) 1 second post-stimulus SSVEPs. The wait time represents the time between the onset of the stimulus and the beginning of our time window intended to capture the time it takes for the brain to respond to the external stimulus.}
\label{fig:latency}
\end{figure}
It is well recognized that there exists a delay (latency) between the onset of an external stimulus and the beginning of the brain's response\cite{FCCA}. This delay widely varies among different subjects and is generally unknown. The latency is an important factor that affects the performance of all SSVEP detection methods. To test the performance of the RPT detector with respect to this uncontrolled latency, we define Wait time and Time window (See Fig. \ref{fig:latency_proc}). Wait time is the time between the stimulus onset and the beginning of Time window. Time window is the portion of the recorded data we are using for our performance evaluation with length $T$ measured in seconds. 
Fig.~\ref{fig:latency}(a) and Fig.~\ref{fig:latency}(b) compare the performance of the RPT detector to those of standard CCA and IT CCA using 0.5 second and 1 second of post-stimulus data as the wait time varies, respectively. Fig.~\ref{fig:latency} demonstrates that the performance of all methods improve as we increase the wait time due to an improved $\SNR$.
Noting that the computational complexity and performance of all these methods depend on the data length underscores an important advantage of the RPT approach. In particular, taking the latency into account one could attain a better performance for a short data length which is crucial for real-time BCI. As shown, for the short data lengths used, the RPT method achieves higher accuracy. Based on these results we consider a 0.25 second wait time after the onset of the stimulus.
\smallbreak
\noindent$\bullet$\textit{Effect of spatial correlation knowledge on the RPT detector performance:} Fig.~\ref{fig:M_class_N_channel_TBME}(a) presents the classification accuracy $A$ as function of the data length $T$ for a multi-class multi-electrode RPT detector using synthesized data.
In this experiment, we generate periodic sequences with $N_c = 8$ electrodes for $M = 9$ classes. The dashed blue curve corresponds to a genie-aided RPT detector that knows $\mathbf{\Sigma_w}$. The dotted red curve is obtained when the RPT detector uses an estimate of $\mathbf{\Sigma_w}$ obtained from pre-stimulus data (real recorded data from one of the subjects). The dashed green curve is for a RPT detector that falsely assumes no spatial correlation between the recorded data from different electrodes and classifies the data thereof, hence this case is denoted as `model mismatch'.
\noindent$\bullet$\textit{Performance comparison of different SSVEP methods using real data:} Figs.~\ref{fig:M_class_N_channel_TBME}(b) and \ref{fig:M_class_N_channel_TBME}(c) illustrate the accuracy and ITR of an RPT detector with $N_c = 8$ electrodes and $M = 9$ classes. This result is obtained using leave-p-out cross validation for $p = 12$, which indicates that only 20\% of the data is used for training. Note that IT CCA outperforms other methods 
if we increase the number of training trials for each class. Fig.~\ref{fig:ITCCA} shows the accuracy and ITR of the aforementioned methods for a multi-class multi-electrode setting using leave-one-out cross validation (i.e., we use the maximum number of training trials to construct the reference matrix. The reference matrix is defined later).
While IT CCA achieves the highest performance, it is important to note that it requires post-stimulus trials to construct the reference matrix. Hence, IT CCA boosts the performance of standard CCA at the expense of additional training with post-stimulus data per subject. By contrast, the RPT detector is unsupervised, in the sense that it only uses training to estimate the spatial correlation matrix $\mathbf{\Sigma_w}$ per subject from pre-stimulus data. Our results also indicate that the RPT detector can outperform PSDA and standard CCA methods for short data lengths, i.e., less than 1.5 seconds.
\begin{figure*}
\vspace{-.5cm}
\centering
\includegraphics[width=\textwidth,height=5cm]{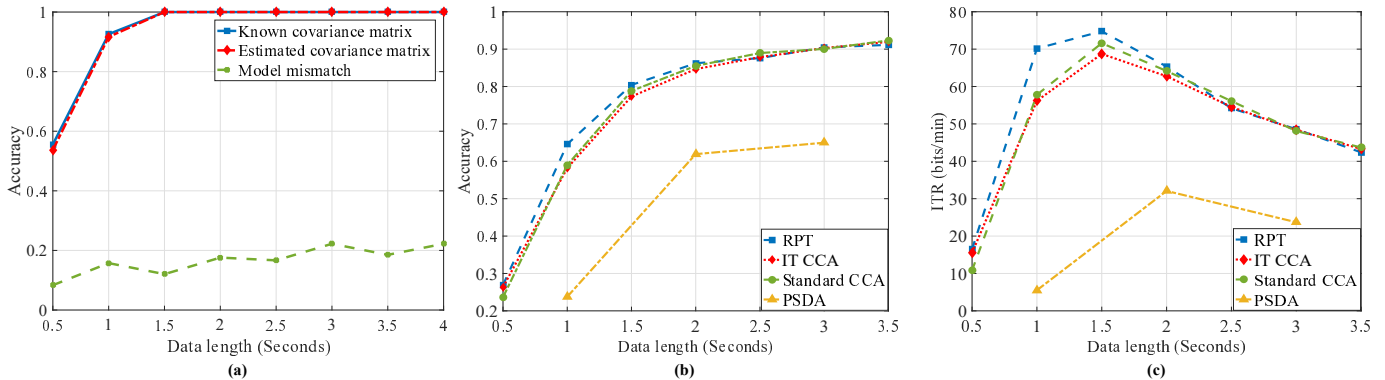}
\vspace{-.75cm}
\caption{Number of classes and electrodes are $M=9$ and $N_c =8$, (a) Classification accuracy $A$ versus the data length $T$, synthesized data is used to generate the plots, they compare $A$ when the RPT detector knows the true $\mathbf{\Sigma_w}$ (dashed blue), $\mathbf{\Sigma_w}$ is unknown and the RPT detector estimates $\mathbf{\Sigma_w}$ using pre-stimulus data (dotted red), the RPT detector falsely assumes there is no spatial correlation between recorded data of different electrodes (dashed green) (b) Classification accuracy $A$ versus the data length $T$, real data is used to generate these plots, they compare the RPT detector, IT CCA, standard CCA, and PSDA. For the PSDA method, the data recorded by the best electrode (the one which yields the highest accuracy) is selected. (c) ITR versus the data length $T$, real data is used to generate these plots.}
\label{fig:M_class_N_channel_TBME}
\vspace{-.25cm}
\end{figure*}
\begin{figure}
\centering
\includegraphics[width=\linewidth,height=4cm]{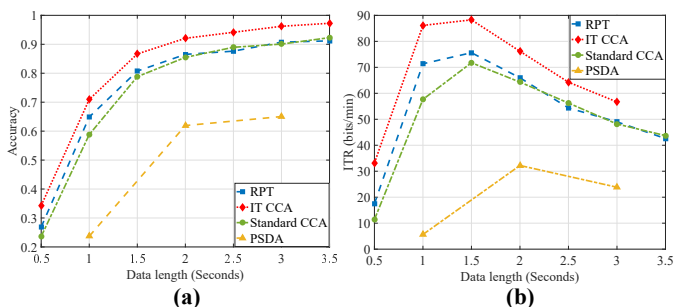}
\vspace{-.75cm}
\caption{(a) Classification accuracy and (b) ITR of the RPT detector, IT CCA, standard CCA and PSDA methods using leave-one-out cross validation.}
\label{fig:ITCCA}
\vspace{-0.25cm}
\end{figure}

\subsection{Difference between Standard CCA and RPT Detection Methods From Basis Representation Perspective}
In this section, we clarify a fundamental difference between standard CCA and the proposed RPT method and that is
the completeness of the basis used for signal representation. Standard CCA first constructs a reference matrix $\bQ_m \in \mathbb{R}^{2N_h\times L}$ from $N_h$ harmonics. Let $\bQ_m =
\begin{bmatrix}\bq_{m_1}&\ldots&\bq_{m_L} \end{bmatrix}$, where $\bq_{m_l}$ is a $2N_h \times 1$ column vector, defined as
\begin{equation}
    \begin{aligned}
    \bq_{m_l} \hspace{-3pt}=\hspace{-3pt} 
    \begin{bmatrix}
    \sin{\frac{\omega_ml}{f_s}}& \hspace{-6pt}\cos{\frac{\omega_ml}{f_s}} &\hspace{-8pt} \ldots &\hspace{-8pt} \sin{ \frac{N_h\omega_m l}{f_s}} &\hspace{-5pt} \cos{\frac{N_h\omega_m l}{f_s}}
    \end{bmatrix}^T
    \end{aligned}
\end{equation}
where $\omega_m = 2\pi f_m$ and $l = 1,2,\ldots,L$. For the measurement matrix $\bY$ and the reference matrix $\bQ_m$, standard CCA finds weight vectors $\bw_y$ ($N_c \times 1$) and $\bw_{q_m}$ ($2N_h \times 1$), which maximize the correlation between linear combinations of the signals recorded from the electrodes $\bz = \bY\bw_y$ and signals in the reference matrix $\bs_m = \bQ_m^T\bw_{q_m}$ (both are $L \times 1$ vectors) via solving the following optimization problem
\begin{equation}
\label{eq:CCA_opt}
    \begin{aligned}
    \rho_m = \max \rho(\bz,\bs_m) = \max_{\bw_y,\bw_{q_m}}\frac{\mathbb{E}[\bz\bs_m^T]}{\sqrt{\mathbb{E}[\bz\bz^T]\mathbb{E}[\bs_m\bs_m^T]}},
    \end{aligned}
\end{equation}
where $\rho_m$ is known as the maximum canonical correlation. Solving the optimization problem in (\ref{eq:CCA_opt}) $M$ times, once for each goal frequency $f_m$, and finding $\rho_m$'s, standard CCA obtains the decision using the following rule
\begin{equation}
    \begin{aligned}
    \hat{m} = \argmax_{m\in \mathcal{M}}\{\rho_0,\ldots,\rho_{M-1}\}.
    \end{aligned}
\end{equation}
However, such reference matrices do not form a finite complete basis for representing periodic signals. Hence, to be able to represent fully periodic signal, Standard CCA requires an infinite number of harmonics to construct $\bQ_m$ matrices. Therefore, the performance of Standard CCA depends on the number of harmonics used in the reference matrix. By contrast, in the RPT method we leverage a complete finite basis in the form of NPMs. To demonstrate this fact using synthesized data, we generate random periodic sequences with periods $T_0 = 32$  and $T_1 = 18$ that correspond to the two stimuli under hypotheses $H_0$  and $H_1$, respectively. We fix $\SNR =-12$ dB and generate 2000 observation vectors $\by$  under each hypothesis.
Fig.~\ref{fig:harmonics}(a) depicts the classification accuracy of the binary RPT detector and standard CCA versus different data lengths $T$ using the generated data. For the standard CCA method we consider three cases: in the first case we let the number of harmonics $N_h=1$ to generate the $\bQ_m$ reference matrices, in the second case $N_h=2$, and in the third case $N_h=3$. To produce Fig.~\ref{fig:harmonics}(b) we use the real data, and for standard CCA we consider two cases of $N_h=1$ and $N_h=2$. 
Both figures show that the accuracy of standard CCA method improves as we increase $N_h$ and enrich the reference matrices. This is in contrast to the RPT detection method which uses a finite complete basis for representing the periodic signals. 
\begin{figure}
\centering
\includegraphics[width=\linewidth,height=4cm]{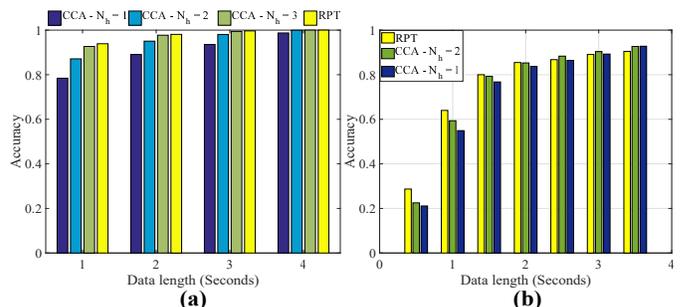}
\vspace{-0.75cm}
\caption{(a) Classification accuracy of standard CCA and the RPT detector on synthesized data, (b) real data.}
\label{fig:harmonics}
\vspace{-0.25cm}
\end{figure}

\subsection{Error Exponent- Discrimination Rate Tradeoff Based on Real and Synthesized Data}
In this section, we show the error exponent-discrimination rate tradeoff using both synthesized and real data.

\noindent$\bullet$\textit{Tradeoff based on synthesized data:} We generate random periodic sequences with period $T_m=T_0 + (m-1)\Delta T$, where $T_0=10$ and $\Delta T=1$ under the model in (\ref{eq:M_class_model_TBME}) for $M=11$ classes and $N_c = 8$ electrodes. We fix the data length $L = 50$ samples and $\SNR = -10$ dB and we generate 500 observations under each hypothesis.
We let the $(i,j)$-th entry of the spatial correlation matrix be $\mathbf{\Sigma_w}_{i,j}=\rho^{d_{ij}}$
for some $0 < \rho < 1$, where $d_{ij}$ denotes the distance between electrodes $i$ and $j$. Fig. \ref{fig:Mul_diversity_synz_TBME}(a) and \ref{fig:Mul_diversity_synz_TBME}(b) plot the error exponent $-\log P_e/(L\cdot\SNR)$ versus the discrimination rate $\log_2M$ using the generated synthesized data for $N_c = 4$ and $N_c = 8$, respectively. These figures show that the RPT detector achieves a better tradeoff compared to standard CCA. Fig. \ref{fig:Mul_diversity_synz_TBME}(c) plots the same using the synthesized data for the RPT detector for $N_c=1,2,4,8$. This figure shows that a larger $N_c$ yields a better tradeoff between the error exponent and the discrimination rate.

\noindent$\bullet$\textit{Tradeoff based on real data:} Fig.~\ref{fig:Mul_diversity_real_TBME}(a) illustrates the tradeoff using real data for $L= 256$ ($T= 1$ second) and $N_c=8$, indicating that the RPT detector exhibits a better tradeoff compared to standard CCA and IT CCA methods. Fig.~\ref{fig:Mul_diversity_real_TBME}(b) plots the same using real date for the RPT detector for $N_c=1,2,4,8$, verifying that a larger $N_c$ provides a better tardeoff between the error exponent and the discrimination rate. In these experiments $\SNR$ is estimated from pre-stimulus and post-stimulus data.
\begin{figure*}
\centering
\includegraphics[width=\textwidth,height=5.5cm]{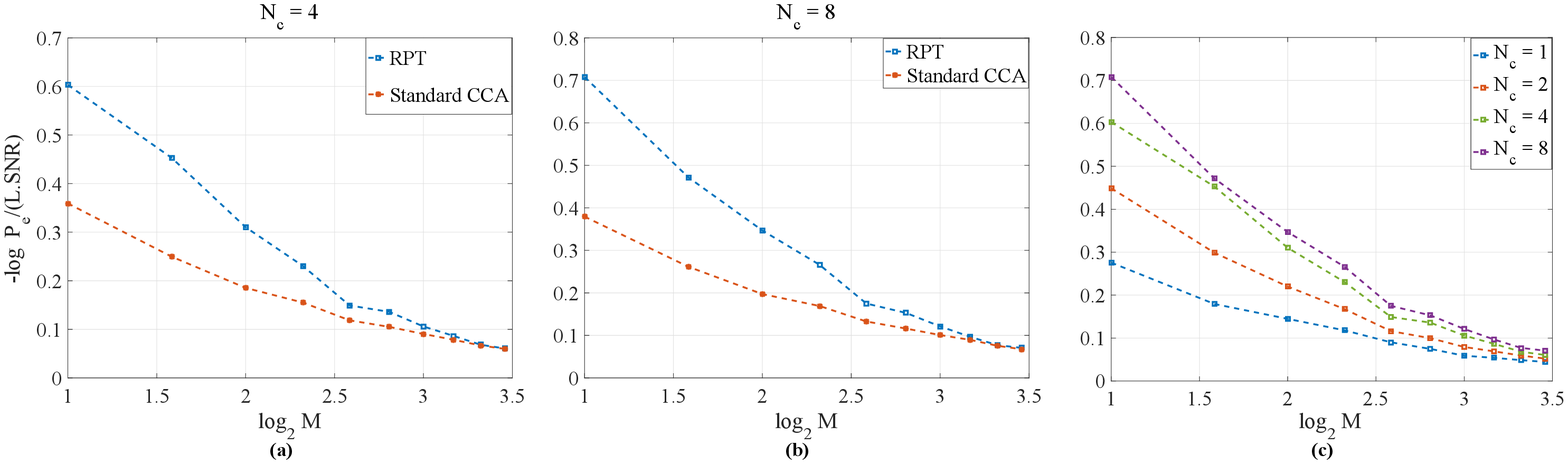}
\vspace{-.75cm}
\caption{(a) and (b) Error exponent-discrimination rate tradeoff using synthesized data for $N_c = 4$ and $N_c = 8$ respectively, with $M$ changing from 2 to 11 with fixed $L=50$ and $\SNR = -10$ dB. (c)  Error exponent-discrimination rate tradeoff for the RPT detector using synthesized data for different $N_c$.}
\label{fig:Mul_diversity_synz_TBME}
\vspace{-.5cm}
\end{figure*}

\begin{figure}
\centering
 \includegraphics[width=\linewidth,height=4cm]{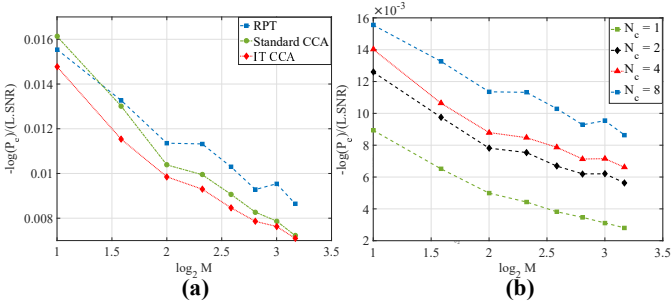}
\vspace{-.7cm}
\caption{(a) Error exponent-discrimination rate tradeoff using real data with $M$ changing from 2 to 9 with fixed $L=256$ and $N_c = 8$. (b)   Error exponent-discrimination rate tradeoff of the RPT detector for different $N_c$.}
\label{fig:Mul_diversity_real_TBME}
\vspace{-.5cm}
\end{figure}

\section{Discussion}  \label{sec:disc}
We started with an in depth analysis of a composite binary hypothesis testing framework to detect periodic SSVEP responses. The RPT detector distinguishes $M$ classes (corresponding to $M$ hypotheses $H_m$) using the Ramanujan subspace, i.e., the RPT detector chooses the class corresponding to the subspace that yields the largest projection. We remark that NPMs are designed for integer periodicity estimation. Hence, we have to round the periods corresponding to the stimuli frequencies to the nearest integer values. This is the main drawback of employing the RPT detector for SSVEP detection since we may not have exact integer periods.
Given the model in section \ref{sec:M_ary model}, the Ramanujan matrix $\bK_{S_m}$  corresponding to $T_m$ contains the submatrices $\bR_m$ and $\bR_{d|m}$. Therefore, the submatrices within, inherently span the subspace of $T_m$ and its divisors. If $T_m$ is an even integer, its divisors correspond to even harmonics of the goal frequency $f_m$ (i.e., $nf_m$ where $n$ is even), which can enhance the performance of the RPT detector. 

We also compared the performance of the RPT detector to that of a fictitious detector that knows $\bx_0$ and $\bx_1$, i.e., the signal representations in the RPT dictionary. This detector yields an upper bound on the performance for the composite test. As shown in Fig.~\ref{fig:GAP_TBME}, the RPT detector based on the GLRT is asymptotically optimal as it closes the gap to the perfect measurement bound as $L$ increases.

The proposed RPT detection method is unsupervised since it does not use any information from post-stimulus data and uses only pre-stimulus data to estimate the covariance matrix capturing the spatial correlation between the recorded data of different electrodes. 
The performance of supervised methods is heavily dependent on the number of training trials. For instance, for the IT CCA method it is crucial to have enough training data to construct the reference matrices and it has been shown that reducing the number of training trials can deteriorate the performance of the method.

In contrast to the reference matrix in standard CCA which does not provide a complete basis for periodic signals, the RPT detector leverages a complete dictionary spanning the subspaces of all periodic signals.




\section{Conclusion}  \label{sec:conc}
We proposed and analyzed a new approach to SSVEP detection using linear representations in RPT dictionaries known to be robust to noise and latency. The RPT detector outperforms the state-of-the-art methods in the short data length regime crucial for real-time BCI. Further, it does not depend on post-stimulus data, which reduces the overhead associated with data collection in supervised methods.   
Furthermore, we introduced a new tradeoff between the error exponent and the discrimination rate, which can serve as a basis for comparing SSVEP detection schemes across an entire range of operation regimes where efficiency is traded for rate and vice-versa. 

\appendices
\section{Proof of Theorem \ref{thr:dist_theorem}} 
\label{App:proof_thr_dist}
From \cite[Lemma 1.1]{anderson1980cochran}, the RVs $\by^T\bB^{\perp}
\by$ and $\by^T \bA^{\perp} \by$ in (\ref{eq:dec_orthogonal_TBME}) have non-central Chi-squared distributions with $r_B^{\perp}$ and $r_A^{\perp}$ degrees of freedom and non-centrality parameters $\lambda_{m,B}^{2,\perp}$ and $\lambda_{m,A}^{2,\perp}$, respectively, under hypothesis $H_m$, where $r_B^{\perp}=\tr(\bB^{\perp})$, $r_A^{\perp}=\tr(\bA^{\perp})$, $\lambda_{m,B}^{2,\perp} = \boldsymbol{\mu}_m^T \bB^{\perp} \boldsymbol{\mu}_m$, $\lambda_{m,A}^{2,\perp} = \boldsymbol{\mu}_m^T \bA^{\perp} \boldsymbol{\mu}_m$ and $\boldsymbol{\mu}_m$ is the mean of the observation $\by$ under $H_m$. Moreover, we have shown in Lemma \ref{lem:A_B_independent} that $\bA^{\perp}\bB^{\perp} = \mathbf{0}$. Hence, it follows from \cite{cochran1934distribution} that $\by^T\bB^{\perp}  
\by$ and $\by^T \bA^{\perp} \by$ are independent. From the definitions above and the orthogonality of the submatrices associated with different divisors, we can readily show that $\lambda_{1,A}^{2,\perp} = \lambda_{0,B}^{2,\perp}=0$. Therefore, the test statistic $\ell(\by)$ in (\ref{eq:dec_orthogonal_TBME}) is the difference between a non-central and a central Chi-squared RV. Per \cite[Theorem 1.1]{lancaster1969chi}, a non-central Chi-squared RV with $n$ degrees of freedom can be represented as a sum of a non-central Chi-squared RV with one degree of freedom with the same non-centrality parameter and a central Chi-squared RV with $n-1$ degrees of freedom. Moreover, we can write a central Chi-squared RV with $(m+n)$ degrees of freedom as the sum of two independent Chi-squared RVs with $m$ and $n$ degrees of freedom. Accordingly, under $H_1$,
\begin{equation}
\label{eq:ly_as_linearcomb}
\begin{aligned}
\ell(\by)\hspace{-1pt}= \ell_1(\by) + \ell_2(\by) - \ell_3(\by) - \ell_4(\by)
\end{aligned}
\end{equation}
where $\ell_1(\by)\sim \chi^2(1,\lambda_{1,B}^{2,\perp})$, $\ell_2(\by)\sim \chi^2(r_B^{\perp}-1,0)$,$\ell_3(\by)\sim \chi^2(1,0)$ and $\ell_4(\by)\sim \chi^2(r_A^{\perp}-1,0)$.
Using \cite[Theorem 3.3]{press1966linear} which characterizes the distribution of the difference of linear combinations of non-central Chi-squared RVs, we can express the distribution of $\ell(\by)$ as 
\begin{equation}
\begin{aligned}
h(t) = \sum_{i=0}^\infty\sum_{j=0}^\infty q_i l_j p_{r^\perp_B+2i,r_A^\perp+2j}(t),
\end{aligned}
\end{equation}
where $p_{a,b}(.)$ is defined in (\ref{eq:dist_final_TBME}), and following from \cite[Theorem 2.1A]{press1966linear} and (\ref{eq:ly_as_linearcomb}), we can readily obtain the coefficients $q_i$'s and $l_j$'s as
\begin{equation}
\begin{aligned}
q_0 = \exp\left(\frac{-\lambda_{1,B}^{2,\perp}}{2}\right), \qquad l_0 = 1,
\end{aligned}
\end{equation}
\begin{equation}
\begin{aligned}
q_i = \frac{\exp\left({ \frac{-\lambda_{1,B}^{2,\perp}}{2}}\right)\left(\frac{\lambda_{1,B}^{2,\perp}}{2}\right)^{i}}{i!} \quad \text{for } i\neq 0\:,
\end{aligned}
\end{equation}
and $l_j = 0$ for $j > 0$. Similarly, we can derive the corresponding coefficients under $H_0$. Based on the aforementioned definitions for the non-centrality parameters, we have
\begin{equation}
\begin{aligned}
\lambda_{0,A}^{2,\perp}=& \bx_{S_0}^{T}\bK_{S_0}^T\bA^{\perp} \bK_{S_0}\bx_{S_0}\\
\lambda_{1,B}^{2,\perp}=&\bx_{S_1}^{T}\bK_{S_1}^T\bB^{\perp} \bK_{S_1}\bx_{S_1}
\end{aligned}
\end{equation}
which subsequently leads to the equations in (\ref{eq:char_ortg_TBME}).
\section{Proof of (\ref{eq:cross_cov_TBME})} \label{App:proof_cov}
Here, we analyze the covariance of the two quadratic terms in the expression of the test statistic $\ell(\by)$ in (\ref{eq:Decision_rule_general_binary_TBME}) under $H_0$.

\begin{equation}
\label{eq:cov_nonorthogonal}
\begin{aligned}
\cov(&\by^T \bB \by ,\by^T \bA \by|H_0) \\
= & \mathbb{E}_0[\by^T \bB \by\by^T \bA \by]-\mathbb{E}_0[\by^T \bB \by]\mathbb{E}_0[\by^T \bA \by]
\end{aligned}
\end{equation}
where $\mathbb{E}_m$ denotes the expectation under $H_m$. 
Focusing on the first term in (\ref{eq:cov_nonorthogonal}) we expand it as below:
\begin{equation}
\begin{aligned}
&\mathbb{E}_0[\by^T \bB \by\by^T \bA \by]= \\
&\bx_{S_0}^T\bK_{S_0}^T\bB\bK_{S_0}\bx_{S_0}\bx_{S_0}^T\bK_{S_0}^T\bK_{S_0}\bx_{S_0}+\mathbb{E}[\bw^T\bB\bw\bw^T\bA\bw]\\& + \bx_{S_0}^T\bK_{S_0}^T\bB \mathbb{E}[\bw\bx_{S_0}^T\bK_{S_0}^T\bw] + \bx_{S_0}^T\bK_{S_0}^T\bB \mathbb{E}[\bw\bw^T]\bK_{S_0}\bx_{S_0}\\& + \bx_{S_0}^T\bK_{S_0}^T\bB \mathbb{E}[\bw\bw^T\bA\bw] + \mathbb{E}[\bw^T\bB\bK_{S_0}\bx_{S_0}\bx_{S_0}^T\bK_{S_0}^T\bw]\\
& + \mathbb{E}[\bw^T\bB\bK_{S_0}\bx_{S_0}\bw^T]\bK_{S_0}\bx_{S_0} + \mathbb{E}[\bw^T\bB\bK_{S_0}\bx_{S_0}\bw^T\bA\bw]\\&+\mathbb{E}[\bw^T\bB\bw]\bx_{S_0}^T\bK_{S_0}^T\bK_{S_0}\bx_{S_0}
+\mathbb{E}[\bw^T\bB\bw\bx_{S_0}^T\bK_{S_0}^T\bw]\\&+\mathbb{E}[\bw^T\bB\bw\bw^T]\bK_{S_0}\bx_{S_0}+\bx_{S_0}^T\bK_{S_0}^T\bB\bK_{S_0}\bx_{S_0} \mathbb{E}[\bw^T\bA\bw]
\end{aligned}
\end{equation}
Using the definitions in (\ref{eq:chi_char_TBME}), it simplifies to
\begin{equation}
\label{eq:cov_sec_m}
\begin{aligned}
\mathbb{E}_0[\by^T \bB \by\by^T \bA \by]&=\lambda_{0,B}^2\lambda_{0,A}^2+r_Br_A +2 \sum_{i=1}^L\sum_{j=1}^L c_{ij} \\& + 4\lambda_{0,B}^2 +r_B\lambda_{0,A}^2 +r_A\lambda_{0,B}^2
\end{aligned}
\end{equation}
where $c_{ij}$ are the entries of the matrix $\bC= \bA \odot \bB$, where
$\odot$ denotes the element-wise product. Similarly, we can show that
\begin{equation}
\label{eq:eq:cov_first_m}
\begin{aligned}
&\mathbb{E}_0[\by^T \bB\by]\mathbb{E}_0[\by^T \bA\by]  = (\lambda_{0,B}^2 + r_B)(\lambda_{0,A}^2+r_A)
\end{aligned}
\end{equation}
Substituting (\ref{eq:cov_sec_m}) and (\ref{eq:eq:cov_first_m}) into (\ref{eq:cov_nonorthogonal}) we find
\begin{equation}
\begin{aligned}
\cov(\by^T \bB \by,\by^T \bA \by|H_0) = 2 \sum_{i=1}^L\sum_{j=1}^L c_{ij}+ 4\lambda_{0,B}^2 
\end{aligned}
\end{equation}
The term $\cov(\by^T \bB \by,\by^T \bA \by|H_1)$ can be derived similarly.


\section{Proof of Lemma \ref{lem:Gap}} \label{App:lemma_gap}
From (\ref{eq:SNR}) We have
\begin{equation}
\begin{aligned}
L\cdot\SNR_m =& \bx_{S_m}^T\bK_{S_m}^T\bK_{S_m}\bx_{S_m}.\\
\end{aligned}
\end{equation}
Based on the definitions of  $\lambda_{0,A}^2$ and $\lambda_{1,B}^2$ in (\ref{eq:chi_char_TBME}), we have $L\cdot\SNR_0 = \lambda_{0,A}^2$ and $L\cdot\SNR_1 = \lambda_{1,B}^2$. Borrowing the Q-function approximation in \cite{borjesson1979simple} given below
\begin{equation}\label{eq:Q_apprx}
\begin{aligned}
Q(x) \approx \frac{e^{-x^2/2}}{\sqrt{2\pi}\sqrt{1+x^2}} \quad \text{for} \quad x>0,
\end{aligned}
\end{equation}


we can approximate (\ref{eq:P_D_NP_binary_TBME}) and (\ref{eq:P_D_PMB_TMBE}) for large values of $L\cdot\SNR_m$ as the following
\begin{equation}
\begin{aligned}
P_{D} \approx 1 - \frac{e^{-\frac{L(\SNR_0+\SNR_1)^2}{8\SNR_1}}}{\sqrt{2\pi}\sqrt{\frac{L(\SNR_0+\SNR_1)^2}{4\SNR_1}}}
\end{aligned}
\end{equation}
and
\begin{equation}
\begin{aligned}
P_{D_{\text{PMB}}} \approx 1- \frac{e^{-\frac{L(\SNR_0+\SNR_1)}{2}}}{\sqrt{2\pi}\sqrt{L(\SNR_0+\SNR_1)}}.
\end{aligned}
\end{equation}
Without loss of generality we assume $\SNR = \SNR_0 = \SNR_1$, and thus we have $L\cdot\SNR = L\cdot\SNR_0 = L\cdot \SNR_1$. Using this assumption we can write: 
\begin{equation}
\begin{aligned}
P_{D} \approx 1 - \frac{e^{-\frac{L\cdot\SNR}{2}}}{\sqrt{2\pi}\sqrt{L\cdot\SNR}}
\end{aligned}
\end{equation}
and
\begin{equation}
\begin{aligned}
P_{D_{\text{PMB}}} \approx 1- \frac{e^{-L.\SNR}}{\sqrt{2\pi}\sqrt{2L.\SNR}}.
\end{aligned}
\end{equation}
Therefore, the gap is approximated by
\begin{equation}
\begin{aligned}
\gap(L,\SNR)  \approx&\frac{e^{-\frac{L\cdot\SNR}{2}}}{\sqrt{2\pi}\sqrt{L\cdot\SNR}} - \frac{e^{-L\cdot\SNR}}{\sqrt{2\pi}\sqrt{2L\cdot\SNR}}\\
=&\frac{e^{-\frac{L\cdot\SNR}{2}}\left(\sqrt{2}-e^{-\frac{L\cdot\SNR}{2}}\right)}{2\sqrt{\pi}\sqrt{L\cdot\SNR}}
\end{aligned}
\end{equation}
with the asymptotic order in (\ref{Gap_order_TBME}).

\section{} \label{App:est}
Let $\mathcal{D}$ denote the first derivative of the expression in (\ref{eq:ML_X_S_TBME}) with respect to $\bX_{S_m}$. The estimate $\hat{\bX}_{S_m}$ can be obtained by setting $\mathcal{D}$ equal to zero. It is easy to verify the following \cite{petersen2008matrix},
\begin{equation}
\begin{aligned}
\mathcal{D}=\frac{\partial}{\partial \bX_{S_m}}& \hspace{-.5mm}\tr\left(-2\bY\mathbf{\Sigma_w}^{-1}\bX_{S_m}^{T}\bK_{S_m}^{T}\right) + \\
\frac{\partial}{\partial \bX_{S_m}}& \hspace{-.5mm}\tr\hspace{-.5mm}\left(
\bK_{S_m}\bX_{S_m}\mathbf{\Sigma_w}^{-1}\bX_{S_m}^T\bK_{S_m}^T\right) \hspace{-.5mm}\\ 
=& -2\bK_{S_m}^{T}\bY\mathbf{\Sigma_w}^{-1}+2\bK_{S_m}^T\bK_{S_m}{\bX_{S_m}}\mathbf{\Sigma_w}^{-1}\nonumber
\end{aligned}
\label{eq:trace_der}
\end{equation}
Letting $\mathcal{D}$ equal to zero and solving for $\bX_{S_m}$ we find
\begin{equation} \nonumber
\begin{aligned}
\hat{\bX}_{S_m} = \left(\bK_{S_m}^{T}\bK_{S_m}\right)^{-1}\bK_{S_m}^{T}\bY.
\end{aligned}
\end{equation}

\section*{Acknowledgments}
This work was supported in part by NSF Grants CCF-1525990, CCF-1552497 and CCF-1341966.
\bibliographystyle{IEEEtran}
\bibliography{bibfile}

%
\end{document}